\providecommand{\ignore}[1]{}
\newif\ifcmnt
\newtheorem{thm}{Theorem}[section]
\newtheorem{prop}[thm]{Proposition}
\newtheorem{lem}[thm]{Lemma}
\newtheorem{cor}[thm]{Corollary}
\theoremstyle{definition}
\newtheorem{definition}[thm]{Definition}
\newtheorem{example}[thm]{Example}
\numberwithin{equation}{section}
\numberwithin{equation}{section}
\newtheorem*{theorem*}{Theorem}
\newtheorem*{proposition*}{Proposition}
\newtheorem*{lemma*}{Lemma}
\newcommand{\al}{\alpha}
\newcommand{\la}{\lambda}
\newcommand{\one}{\mathtt{1}}
\newcommand{\Ups}{\Upsilon}
\newcommand{\da}{\downarrow}
\newcommand{\ua}{\uparrow}
\newcommand{\varnot}{\varnothing}
\newcommand{\ga}{\gamma}
\newcommand{\Og}{\Omega}
\newcommand{\cM}{\mathcal{M}}
\newcommand{\cN}{\mathcal{N}}
\newcommand{\cP}{\mathcal{P}}
\newcommand{\cQ}{\mathcal{Q}}
\newcommand{\cT}{\mathcal{T}}
\newcommand{\bC}{\mathbb{C}}
\newcommand{\bR}{\mathbb{R}}
\begin{document}
\title{Refining Ky Fan's majorization relation with linear programming}

\author{Mohammad A. Alhejji\footnote{Correspondence: malhejji@unm.edu}}
\affil{Center for Quantum Information and Control, University of New Mexico, Albuquerque, NM 87131, USA.}

\maketitle
\begin{abstract}
A separable version of Ky Fan's majorization relation is proven for a sum of two operators that are each a tensor product of two positive semi-definite operators. In order to prove it, upper bounds are established on the relevant largest eigenvalue sums in terms of the optimal values of certain linear programs. The objective function of these linear programs is the dual of the direct sum of the spectra of the summands. The feasible sets are bounded polyhedra determined by positive numbers, called alignment terms, that quantify the overlaps between pairs of largest eigenvalue spaces of the summands. By appealing to geometric considerations, tight upper bounds are established on the alignment terms of tensor products of positive semi-definite operators. As an application, the spin alignment conjecture in quantum information theory is affirmatively resolved to the 2-letter level. Consequently, the coherent information of platypus channels is additive to the 2-letter level. 
\end{abstract}

\section{Introduction}
\label{sec: intro}
Among the open problems in quantum information theory, the problem of regularized formulas sticks out like a sore thumb. Put briefly, known formulas for the optimal rates of asymptotically error-free information processing---a principal concern of the theory--- are given by limits of sequences of optimizations over domains with exponentially growing dimensions. Often, little is known about the convergence rates of these sequences. It is commonly surmised that much is yet to be understood about information processing in quantum-mechanical contexts. To make progress, new analysis tools are necessary.

Let us consider the transmission of quantum information through a memoryless quantum channel \(\cN\) (a linear completely positive trace-preserving map). The quantum capacity of \(\cN\) is the largest rate of asymptotically error-free transmission of quantum information through \(\cN\), measured in qubits per channel use. Let \(\cQ (\cN)\) denote the quantum capacity of \(\cN\). We have one formula for \(\cQ  (\cN)\), due to Lloyd \cite{Lloyd1997}, Shor \cite{shor2002} and Devetak \cite{Devetak2005}, in terms of the coherent information of \(\cN\). Let \(\cN^c\) denote a complementary channel to \(\cN\) defined via a Stinespring dilation (see, for example, Ch.~5 of Ref.~\cite{Wilde2013}). The output of \(\cN^c\) is conventionally called the environment of \(\cN\). The coherent information of \(\cN\) is given by  
\begin{align}
\label{eq: coherent info}
I^{\text{(coh)}} (\cN) := \max_{\rho} H( \cN (\rho) ) - H (\cN^c (\rho)),
\end{align}
where \(H (X) := -\tr (X \log_2 X)\) is the von Neumann entropy. The maximization is over quantum states (positive semi-definite operators with unit trace) in the domain of \(\cN\). The formula we have for \(\cQ (\cN)\) is the regularized formula:
\begin{align}
\label{eq: quantum capacity}
\cQ (\cN) = \lim_{n \rightarrow \infty} \frac{I^{\text{(coh)}} (\cN^{\otimes n})}{n}.
\end{align}

It follows from the additivity of the von Neumann entropy under tensor products that \(I^{\text{(coh)}}(\cN^{\otimes n}) \geq n I^{\text{(coh)}} (\cN)\). If \(I^{\text{(coh)}}(\cN^{\otimes n}) = n I^{\text{(coh)}} (\cN)\), then \(\cN\) is said to have \textit{additive} coherent information to the \(n\)-letter level. If this is the case for all \(n\), then \(\cQ (\cN)\) has the manifestly computable formula \(I^{\text{(coh)}} (\cN)\). Unfortunately, this is generically not the case (see, for example,  Sec.~IV.B of Ref.~\cite{Koudia2022}). A stark instance of this \textit{non-additivity} is found in Ref.~\cite{Cubitt2015}, where Cubitt et al. show that for every positive integer \(m\), there exists a quantum channel \(\cN_m\) such that \(I^{\text{(coh)}} (\cN_m^{\otimes m}) = 0\) while \(\cQ(\cN_m) > 0\).

Here we are concerned with a conjecture called the spin alignment conjecture. Introduced by Leditzky et al. in Ref.~\cite{Leditzky_2023}, it is a conjecture about the entropies of convex mixtures of quantum states that have a certain tensor product structure. It arose in the context of computing the quantum capacities of platypus channels. The particular form of these quantum channels is not relevant to us here. By a reduction in Sec.~V of Ref.~\cite{Leditzky_2023}, each platypus channel \(\cM\) has an associated quantum state \(M\) such that the coherent information of \(\cM\) is additive to the \(n\)-letter level if for all probability distributions \(p\) on the subsets of \([n]:=\{1,2,\ldots, n\}\) and all tuples of pure (rank-\(1\)) quantum states \((\psi_J)_{J \subseteq [n]}\), it holds that
\begin{align}
\label{eq: align entropy}
H(\sum_{J \subseteq [n]} p_{J}\psi_{J} \otimes M^{\otimes J^c}) \geq H(\sum_{J \subseteq [n]} p_{J}(v_M v_M^*)^{\otimes J} \otimes M^{\otimes J^c}), 
\end{align}
where \(v_M\) is a unit eigenvector of \(M\) corresponding to its largest eigenvalue. For each subset \(J \subseteq[n]\), the expression \(\psi_J \otimes M^{\otimes J^c}\) denotes a quantum state that is separable across the bipartition \(J \mid J^c\). The factor \(\psi_J\) denotes the quantum state of particles whose labels are in \(J\) and the tensor power \(M^{\otimes J^c}\) denotes the quantum state of particles whose labels are in the complement \(J^c\). The spin alignment conjecture is that the inequality above holds in general. Results concerning the interplay between platypus channels and other quantum channels are shown in Ref.~\cite{Leditzky2022b}, some of which are conditioned on this conjecture.

Described in words, the entropy of the convex mixture \(\sum_{J \subseteq[n]} p_{J}\psi_{J} \otimes M^{\otimes J^c}\), which may be interpreted as the quantum state of the environment of \(\cM^{\otimes n}\), is conjectured to be smallest when \((\psi_J)_{J \subseteq[n]}\) is chosen so that the particles have maximally aligned spins. The stronger conjecture that 
\begin{align}
\label{rel: spin alignment maj}
\sum_{J \subseteq [n]} p_{J}\psi_{J} \otimes M^{\otimes J^c} \preceq \sum_{J \subseteq [n]} p_{J}({v_M} {v_M}^*)^{\otimes J}  \otimes M^{\otimes J^c},
\end{align}
where \(\preceq\) is the majorization symbol (see Sec.~\ref{sec: prelims} for a formal definition), was pursued in Ref.~\cite{Alhejji2024}. As of this writing, a complete resolution of either conjecture is yet to be found.

It is instructive to consider the \(n=1\) case.  For a self-adjoint (or Hermitian) operator \(A\) on \(\bC^d\), \(\la (A) \in \bR^d\)  denotes the tuple of eigenvalues of \(A\) arranged so \(\la_1 (A) \geq \la_2 (A) \geq \cdots \geq \la_d (A)\). The relation above immediately follows from Ky Fan's majorization relation~\cite{Fan1949} (also known as Ky Fan's eigenvalue inequality~\cite{Moslehian_2012}):
\begin{align}
\la(p_{\{1\}} \psi_{\{1\}} + p_{\varnot} M) &\preceq \la(p_{\{1\}} \psi_{\{1\}}) + \la(p_{\varnot} M)\\
&= \la(p_{\{1\}} v_M v_M^*) + \la(p_{\varnot} 
M)\\
&= \la(p_{\{1\}} v_M v_M^* + p_{\varnot} M).
\end{align}
The last equality is crucial to this argument. More generally, for self-adjoint operators \(A_1\) and \(A_2\), the equality \(\la(A_1) + \la(A_2) = \la(A_1 + A_2)\) holds if and only if there exist orthonormal vectors \(w_1, \ldots, w_{d}\) such that \(A_1 w_i = \la_i (A_1) w_i\) and \(A_2 w_i = \la_i (A_2) w_i\) for each \(i \in [d]\) (see, for example, Ref.~\cite{Niezgoda2022}). This invites the following definitions. 
\begin{definition}
\label{def: alignment map}
Let \(q_1, \ldots, q_d\) denote elements of the standard orthonormal basis for \(\bC^d\). We define an \textit{alignment} map \(^\da\) on the space of self-adjoint operator on \(\bC^d\) by the action \(X \mapsto X^\da :=  \sum_{i=1}^d \la_i (X) q_i q_i^*\).
\end{definition}
\begin{definition}
\label{def: perfect alignment}
Given self-adjoint operators \(X_1, X_2, \ldots, X_m\) on \(\bC^d\), we say that they are \textit{perfectly aligned} if there exists a unitary operator \(U\) on \(\bC^d\) such that \(X_i^{\da} = U X_i {U}^*\) for each \(i \in [m]\).
\end{definition}

By considering majorization as a preorder on the space of self-adjoint operators, Fan's majorization relation may be written as follows:
\begin{align}
\label{eq: Ky Fan majorization relation with da}
A_1 + A_2 \preceq A_1^\da + A_2^\da.
\end{align}
This relation generalizes to conic combinations of self-adjoint operators.

The obstacle to extending the argument above to cases where \(n > 1\) is that, in general, there may not exist a pure quantum state tuple \((\psi_J)_{J \subseteq[n]}\) such that the summands \(p_{J}\psi_J \otimes M^{\otimes J^c}\) are perfectly aligned. However, at the conjectured optimal tuple \(({v_M {v_M}^*}^{\otimes J})_{J \subseteq[n]}\), the quantum state of each particle is a convex combination of the perfectly aligned operators \(M\) and \(v_M v_M^*\). This suggests that local perfect alignment may yield minimum entropy when global perfect alignment is impossible.

The main contribution of this work is a formalization of this suggestion for a sum of two tensor products of two positive semi-definite operators. 
\begin{thm}
\label{thm: sep Fan majorization}
Let \(B_1, B_2\) denote two positive semi-definite operators on \(\bC^{d_B}\) and \(C_1, C_2\) denote two positive semi-definite operators on \(\bC^{d_C}\). Then 
\begin{align}
\label{eq: sep Fan majorization relations}
B_1 \otimes C_1 + B_2 \otimes C_2 \preceq B_1^{\da} \otimes C_1^{\da} + B_2^{\da} \otimes C_2^{\da}.
\end{align}
\end{thm}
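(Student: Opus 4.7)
The plan is to verify majorization by establishing, for each $k \in \{1,\ldots,d_B d_C\}$, the inequality between the top-$k$ eigenvalue sums of the two sides of \eqref{eq: sep Fan majorization relations}; equality at $k = d_B d_C$ holds automatically since each alignment map preserves spectra. Setting $A_j := B_j \otimes C_j$, I would start from the variational identity $\sum_{i=1}^k \lambda_i(A_1 + A_2) = \max_P \text{tr}\bigl(P(A_1 + A_2)\bigr)$ over rank-$k$ orthogonal projections $P$ on $\bC^{d_B d_C}$.

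First I would expand each $\text{tr}(P A_j)$ in the eigenbasis of $A_j$, indexing eigenvectors by pairs $(a,b)\in[d_B]\times[d_C]$ with eigenvalue $\lambda_a(B_j)\lambda_b(C_j)$, to obtain
\begin{equation*}
\text{tr}(P A_j) = \sum_{a,b}\mu^{(j)}_{ab}\,\lambda_a(B_j)\lambda_b(C_j), \qquad \mu^{(j)}_{ab}\in[0,1],\quad \sum_{a,b}\mu^{(j)}_{ab}=k.
\end{equation*}
Since $\mu^{(1)}$ and $\mu^{(2)}$ both arise from a common $P$, they are not free: they are coupled by \emph{alignment terms} of the form $\text{tr}\bigl(P\,\Pi^{(1)}_{S_1}\,\Pi^{(2)}_{S_2}\bigr)$, where $\Pi^{(j)}_{S_j}$ projects onto the span of the $A_j$-eigenvectors indexed by a set $S_j \subseteq [d_B]\times[d_C]$. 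These produce linear inequalities bounding how much of the marginals $\mu^{(j)}$ can be concentrated on prescribed index sets simultaneously.

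With these inequalities in hand, the top-$k$ eigenvalue sum of $A_1+A_2$ is upper bounded by the optimum of a linear program whose objective is $\sum_j\langle\lambda(A_j),\mu^{(j)}\rangle$ and whose feasible region is a bounded polyhedron cut out by the box/simplex conditions together with the alignment inequalities. I would then derive tight geometric upper bounds on the alignment terms in the tensor-product case $A_j = B_j \otimes C_j$; these exploit the fact that the eigenspaces of $A_j$ are spanned by product vectors $e_a^{B_j}\otimes e_b^{C_j}$ and therefore have a natural poset/``staircase'' structure controlled by the sorted products $\lambda_a(B_j)\lambda_b(C_j)$. The goal is to show that these tensor-product alignment bounds are exactly attained by a standard-basis rank-$k$ projection.

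Finally, I would produce an LP-feasible point that realizes the top-$k$ sum of the right-hand side: the standard-basis projection onto the $k$ coordinates $(a,b)$ with the largest diagonal entries $\lambda_a(B_1)\lambda_b(C_1) + \lambda_a(B_2)\lambda_b(C_2)$, for which $\mu^{(1)} = \mu^{(2)}$ is a $0/1$ vector. This closes the chain LHS top-$k$ sum $\leq$ LP optimum $\leq$ RHS top-$k$ sum. The main obstacle is the third step: establishing \emph{tight} alignment bounds in the tensor-product setting. Because the top-$k'$ eigenspace of $B_j\otimes C_j$ is not itself a product subspace, the overlap structure between top eigenspaces of $A_1$ and $A_2$ is governed by a Young-diagram-like intersection pattern, and proving that this intersection is maximized in the sense the LP demands by a diagonal configuration will require a careful geometric argument rather than routine algebraic manipulation.
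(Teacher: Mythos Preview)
Your overall plan matches the paper's: relax $s_k(A_1+A_2)$ to a linear program in the overlap vectors, constrain the LP by alignment data, bound that data geometrically in the tensor-product case, and compare with the diagonal configuration. But two steps need repair. First, the coupling constraints cannot be of the form $\tr\bigl(P\,\Pi^{(1)}_{S_1}\,\Pi^{(2)}_{S_2}\bigr)$: this quantity is not linear in $(\mu^{(1)},\mu^{(2)})$ and is not even real when the projectors fail to commute, so it does not cut out a polyhedron. The paper's constraints instead bound the \emph{sum} $\tr\bigl(P\,\Pi^{(1)}_{[\ell_1]}\bigr) + \tr\bigl(P\,\Pi^{(2)}_{[\ell_2]}\bigr)$ by $\al^{(k)}_{(\ell_1,\ell_2)} := s_k\bigl(\Pi^{(1)}_{[\ell_1]} + \Pi^{(2)}_{[\ell_2]}\bigr)$, where $\Pi^{(j)}_{[\ell]}$ is the top-$\ell$ eigenprojector of $A_j$; this is genuinely linear in $\mu^{(1)}\oplus\mu^{(2)}$. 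Restricting the index sets to these nested ``top-$\ell$'' (downward-closed) sets, rather than arbitrary $S_j$, is also what gives the constraint matrix the consecutive-ones property and hence total unimodularity, which the argument later relies on.

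Second, and more substantively, your final step has the inequality backwards. Exhibiting a standard-basis feasible point with objective value $s_k\bigl(B_1^\da\otimes C_1^\da + B_2^\da\otimes C_2^\da\bigr)$ only shows that this number is \emph{at most} the LP optimum; combined with $s_k(\text{LHS}) \leq$ LP optimum, this proves nothing. What is actually needed is that for the simultaneously diagonal pair the LP bound is \emph{tight}, so that the chain reads $s_k(\text{LHS}) \leq u_k(\text{LHS}) \leq u_k(\text{RHS}) = s_k(\text{RHS})$, with the middle inequality coming from the alignment-term comparison $\al^{(k)}_{(\ell_1,\ell_2)}(\text{LHS}) \leq \al^{(k)}_{(\ell_1,\ell_2)}(\text{RHS})$. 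This tightness statement (Thm.~\ref{thm: tightness for diagonal operators}) is not a one-liner: by integrality it reduces to showing that among all feasible $k$-set pairs $(S_1,S_2)$ some symmetric pair $(S,S)$ is optimal, and the paper devotes App.~\ref{sec: appendix} to that combinatorial argument. You correctly flag the geometric alignment bound (the paper's Prop.~\ref{prop: orthogonal tensor products}) as a main obstacle, but the LP-tightness step is a second, independent obstacle of comparable weight.
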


We call this relation a \textit{separable Fan majorization relation}. We use it to resolve the spin alignment conjecture to the \(2\)-letter level (Prop.~\ref{prop: spin alignment n=2}). We prove it by establishing linear programming refinements of Ky Fan's majorization relation (Thm.~\ref{thm: linear programming bound}). A crucial ingredient of the proof is an observation about the geometry of subspaces that are spanned by tensor product vectors (Prop.~\ref{prop: orthogonal tensor products}).

\subsection{Structure of the paper}
Sec.~\ref{sec: prelims} contains notation and relevant mathematical background. Sec.~\ref{sec: theory} contains an estimation schema based on linear programming that refines Fan's majorization relation. In Sec.~\ref{sec: applications to sums}, the schema is used to establish Eq.~\ref{eq: sep Fan majorization relations}. The paper is concluded in Sec.~\ref{sec: conclusion}. The appendix App.~\ref{sec: appendix} contains a deferred part of a proof that the linear programming bounds are tight in the case of simultaneously diagonalizable summands.

\section{Notation and background}
\label{sec: prelims}
We refer to a set of \(k\) elements as a \(k\)-set. For \(x \in \bR^{d}\), \(x^{\da}\) denotes the vector in \(\mathbb{R}^{d}\) whose coordinates are the coordinates of \(x\) albeit ordered (weakly) decreasingly. For \(k \in [d]\), the sum of the \(k\) largest coordinates of \(x\) is denoted by \(s_k(x) := \sum_{i=1}^k x_i^\da\).  For a subset \(T \subseteq [d]\), we write \(\one_T \in \bR^d\) to denote indicator function of \(T\). That is, the vector that satisfies for each \(i \in T, (\one_T)_i = 1\) and for each \(i \notin T, (\one_T)_i = 0\). A vector \(x \in \bR^d\) is said to majorize a vector 
\(y \in \bR^d\) if
\begin{align} 
\label{eq: majorization cond}
s_k(x) \geq  s_k(y)
\end{align}
for each \(k \in [d -1]\) and \(s_d(x) = s_d(y)\). This statement is denoted by \(y \preceq x\). Majorization is defined for self-adjoint operators spectrally (see~Sec.~7 of Ref.~\cite{Ando_1989}). If \(A, A'\) are self-adjoint operators on \(\bC^d\) and \(\la(A') \preceq \la(A)\), then \(A\) is said to majorize \(A'\), and we denote that by \(A' \preceq A\).  As shorthand, we write \(s_k(A) := \sum_{i=1}^k \la_i(A)\). As a relation on a real vector space, majorization is reflexive and transitive, that is, a preorder. See Ref.~\cite{Marshall2011} for a more comprehensive account of majorization theory.

The statement that a vector space \(W\) is a subspace of a vector space \(W'\) is denoted by \(W \leq W'\). We denote the dimension of a subspace \(W\) by \(\dim(W)\) and the orthogonal projection operator, or projector, onto \(W\) by \(P_W\).

Let \(A\) be a self-adjoint operator on \(\bC^d\). For \(k \in [d]\), Fan's maximum principle, proven in Ref.~\cite{Fan1949}, gives an extremal formula for the sum of the \(k\) largest eigenvalues of \(A\):
\begin{align}
\label{eq: Ky Fan maximum principle}
 s_k(A) =  \max_{\dim(W)=k}  \tr(P_W A).
\end{align}
Fan's majorization relation Eq.~\ref{eq: Ky Fan majorization relation with da} is an immediate consequence of this formula. Let  \(\xi_1 (A), \xi_2 (A),\ldots, \xi_d (A)\) denote orthonormal vectors in \(\bC^d\) such that for each \(i \in [d]\), \(A \xi_i (A) = \la_i (A) \xi_i (A)\). These vectors and the corresponding eigenvalues make up an orthonormal spectral decomposition of \(A\). For brevity, we use \(\Pi_i (A)\) to denote the rank-\(1\) projector \(\xi_i (A) \xi_i (A)^*\). For each \(\ell \in [d]\), we denote the linear span of the first \(\ell\) of these eigenvectors by
\begin{align}
    V_{\da \ell} (A) := \langle \{\xi_i (A)\}_{i=1}^\ell \rangle.
\end{align}
We refer to \(V_{\da \ell} (A)\) as a largest \(\ell\) eigenvalue space of 
\(A\). We denote the projector onto this space by \(P_{\da \ell} (A)\). It is useful to express the data of such a spectral decomposition by a complete flag \cite{Gillespie2019},
\begin{align}
\label{eq: subspace flag}
V_{\bullet} (A) := (0 \leq V_{\da 1} (A) \leq V_{\da 2} (A) \leq \cdots \leq V_{\da d} (A) = \bC^d).
\end{align}
If \(\la(A)\) contains repeated values, that is, \(A\) is degenerate, then \(V_{\bullet} (A)\) is not uniquely defined. Unless stated otherwise, we assume a choice is fixed throughout the analysis.

A linear program is an optimization problem whose objective function is linear and whose feasible set is a (convex) polyhedron. We write `s.t.' as shorthand for the phrase `subject to'.  See Ref.~\cite{Schrijver1998} for more details on the theory of linear programming.

Let \(\cT\) be a partially ordered set. A subset \(T \subseteq \mathcal{T}\) is called a downward closed set if for all \(x \in T\) and \(y \in \mathcal{T}\), \(y \leq x \implies y \in T.\) The dual notion, meaning under replacing the symbol \(\leq\) with the symbol \(\geq\), is called an upward closed set. The complement of a downward closed set is an upward closed set. For a set \(S \subseteq \mathcal{T}\), its downward closure \(\da S := \{ y \in \mathcal{T} \mid \exists \, x \in S, y \leq x\}\) is the smallest downward closed set that contains it. The upward closure \(\ua S\) is defined dually. See J.B. Nation's lecture notes on order theory \cite{Nation2017}.

\section{Estimation schema}
\label{sec: theory}
Let \(A_1, A_2\) be self-adjoint operators on \(\bC^d\) and fix \(k \in [d]\). Our aim is to find upper bounds on \(s_k(A_1 + A_2)\) that are generally tighter than \(s_k(A_1) + s_k(A_2)\), the upper bound from Fan's majorization relation Eq.~\ref{eq: Ky Fan majorization relation with da}.

For a \(k\)-dimensional subspace \(W \leq \bC^d\) and \(i \in [d]\), we denote the overlaps
\begin{align}
\label{def: overlap var} 
 x^{(1)}_{W,i} := \tr(P_W \Pi_i (A_1)), \;  x^{(2)}_{W,i} := \tr(P_W \Pi_i(A_2)),
\end{align}
and define the real vectors \(x_{W}^{(1)} := (x^{(1)}_{W,i})_{i = 1}^d, x_{W}^{(2)} := (x^{(2)}_{W,i})_{i = 1}^d\), and \(x_W := x^{(1)}_W \oplus x^{(2)}_W\). We rewrite \(\tr(P_W (A_1 + A_2))\) in terms of these vectors.
\begin{equation}
\label{eq: rewriting of tr form}
\begin{aligned}
\tr(P_W (A_1 + A_2)) &= \sum_{i=1}^d \la_i (A_1) x^{(1)}_{W,i} + \sum_{i=1}^d \la_i (A_2) x^{(2)}_{W,i} \\
&= (\la (A_1) \oplus \la(A_2))^T x_W. 
\end{aligned}
\end{equation}
It can be verified by inspection that \(x_W\) satisfies the linear constraints:
\begin{equation}
\begin{aligned}
\label{eq: basic conditions}
&x^{(1)}_{W,i}, x^{(2)}_{W,i} \geq 0, \forall \, i \in [d],\\ 
&x^{(1)}_{W,i}, x^{(2)}_{W,i} \leq 1, \forall \, i \in [d], \\ 
&\sum_{i=1}^d x^{(1)}_{W,i} = \sum_{i=1}^d x^{(2)}_{W,i} = k.
\end{aligned}
\end{equation}
We refer to these constraints as the \textit{basic constraints}. The set of points in \(\bR^{2d}\) that satisfy the basic constraints is a bounded polyhedron that we denote by \(\cP_0^{(k)}\). We can use this polyhedron to prove the inequality \(s_k(A_1+A_2) \leq s_k(A_1) + s_k(A_2)\) in a rather fancy way. Observe that
\begin{align}
\label{eq: estimate for linear form}
s_k (A_1 + A_2) &= \max_{\dim(W) = k} \tr(P_W (A_1 + A_2))\\
&= \max_{\dim(W) = k} (\la (A_1) \oplus \la(A_2))^T x_W \leq (\la (A_1) \oplus \la(A_2))^T \bar{x},
\end{align}
where \(\bar{x} \in \bR^{2d}\) is an optimal point for the linear program
\begin{equation}
\label{def: lin program}
\begin{aligned}
&\max \: (\la (A_1) \oplus \la(A_2))^T x\\
&\: \text{s.t.} \: x \in \cP_0^{(k)}.
\end{aligned}
\end{equation}
A routine calculation shows that \(\one_{[k]} \oplus \one_{[k]}\) is an optimal point.

To obtain sharper estimates, we need to consider more constraints. To that end, we introduce \textit{alignment terms}. 
\begin{definition}
\label{def: alignment term}
For each pair \((\ell_1, \ell_2) \in [d] \times [d]\), we define the alignment term
\begin{align}
\label{def:  alignment quantity}
\al_{(\ell_1, \ell_2)}^{(k)} (V_\bullet(A_1), V_\bullet(A_2)) :&= \max_{\dim(W)=k} \tr(P_W (P_{\da \ell_1} (A_1) + P_{\da \ell_2} (A_2))) \\
&= s_k (P_{\da \ell_1} (A_1) + P_{\da \ell_2} (A_2)). 
\end{align}
\end{definition}

These \(d^2\) positive numbers are meant to capture the overlaps between the largest eigenvalue spaces of \(A_1\) and the largest eigenvalue spaces of \(A_2\). The flag pair argument \((V_\bullet(A_1), V_\bullet(A_2))\) is included to emphasize the dependence of alignment terms on the assumed spectral decompositions of \(A_1\) and \(A_2\). When it is clear from context, we omit it and write \(\al_{(\ell_1, \ell_2)}^{(k)}\).

For any \(k\)-dimensional subspace \(W \leq \bC^d\), \(x_W\) satisfies, by definition, the linear constraint
\begin{align}
\label{ineq: alignment cons}
(\one_{[\ell_1]} \oplus \one_{[\ell_2]})^T x_W \leq \al_{(\ell_1, \ell_2)}^{(k)} (V_\bullet(A_1), V_\bullet(A_2)).
\end{align}
We call this an \textit{alignment constraint}. We denote the polyhedron of points in \(\cP_{0}^{(k)}\) that satisfy all alignment constraints by \(\cP_1^{(k)} (V_\bullet(A_1), V_\bullet(A_2))\) or simply \(\cP_1^{(k)}\).

\begin{thm}
\label{thm: linear programming bound}
Let \(u_k(A_1, A_2)\) denote the optimal value of the linear program 
\begin{equation}
\label{def: lin program for thm}
\begin{aligned}
&\max \: (\la (A_1) \oplus \la(A_2))^T x\\
&\: \text{s.t.} \: x \in \cP_1^{(k)} (V_\bullet(A_1), V_\bullet(A_2)).
\end{aligned}
\end{equation}
Then \(s_k (A_1 + A_2) \leq u_k(A_1, A_2)\).
\end{thm}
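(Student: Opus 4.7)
The plan is to combine Ky Fan's maximum principle with the observation that, for every $k$-dimensional subspace $W \leq \bC^d$, the overlap vector $x_W$ already satisfies every constraint that defines $\cP_1^{(k)}(V_\bullet(A_1), V_\bullet(A_2))$. Once feasibility of $x_W$ is in hand, the theorem follows by maximizing over $W$ on one side and comparing against the LP optimum $u_k(A_1, A_2)$ on the other.

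First, I would unfold $s_k(A_1 + A_2)$ via Eq.~\ref{eq: Ky Fan maximum principle} as
\[
s_k(A_1 + A_2) \;=\; \max_{\dim(W) = k} \tr\bigl(P_W (A_1 + A_2)\bigr),
\]
and rewrite each trace as the linear form $(\la(A_1) \oplus \la(A_2))^T x_W$ using Eq.~\ref{eq: rewriting of tr form}. It therefore suffices to show $x_W \in \cP_1^{(k)}(V_\bullet(A_1), V_\bullet(A_2))$ for every such $W$.

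Second, I would verify the basic and alignment constraints for $x_W$. The basic constraints Eq.~\ref{eq: basic conditions} are immediate: each coordinate $\tr(P_W \Pi_i(A_j)) = \langle \xi_i(A_j) | P_W | \xi_i(A_j)\rangle$ lies in $[0,1]$ since $0 \leq P_W \leq I$, and the normalization $\sum_{i=1}^d \tr(P_W \Pi_i(A_j)) = \tr(P_W) = k$ follows because $\{\Pi_i(A_j)\}_{i=1}^d$ resolves the identity. For each alignment pair $(\ell_1, \ell_2) \in [d] \times [d]$, linearity of the trace yields
\[
(\one_{[\ell_1]} \oplus \one_{[\ell_2]})^T x_W \;=\; \tr\bigl(P_W(P_{\da \ell_1}(A_1) + P_{\da \ell_2}(A_2))\bigr),
\]
and a second application of Fan's maximum principle, this time to the self-adjoint operator $P_{\da \ell_1}(A_1) + P_{\da \ell_2}(A_2)$, bounds this above by $s_k(P_{\da \ell_1}(A_1) + P_{\da \ell_2}(A_2)) = \al^{(k)}_{(\ell_1,\ell_2)}(V_\bullet(A_1), V_\bullet(A_2))$, which is precisely Eq.~\ref{ineq: alignment cons}. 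With feasibility in hand, $(\la(A_1) \oplus \la(A_2))^T x_W \leq u_k(A_1, A_2)$ for every $k$-dimensional $W$, and taking the maximum over $W$ closes the argument.

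The main obstacle in proving Thm.~\ref{thm: linear programming bound} itself is really just bookkeeping: no nontrivial estimate is hidden in this step, since every inequality is either a direct consequence of Fan's maximum principle or of the definition of $\al^{(k)}_{(\ell_1,\ell_2)}$. The substance of the approach lies further downstream, in extracting an informative bound from $u_k(A_1, A_2)$: one must pin down the alignment terms sharply enough for the alignment constraints to actually cut into $\cP_0^{(k)}$, and in the tensor-product setting of Thm.~\ref{thm: sep Fan majorization} this is exactly where the geometric input of Prop.~\ref{prop: orthogonal tensor products} will be needed.
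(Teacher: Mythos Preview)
Your proposal is correct and follows exactly the same approach as the paper's proof: use Fan's maximum principle together with the fact that $x_W \in \cP_1^{(k)}(V_\bullet(A_1), V_\bullet(A_2))$ for every $k$-dimensional $W$, and the rewriting of $\tr(P_W(A_1+A_2))$ as a linear form via Eq.~\ref{eq: rewriting of tr form}. Your write-up simply expands the verification of the basic and alignment constraints that the paper established just before the theorem statement.
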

\begin{proof}
The bound follows from Fan's maximum principle Eq.~\ref{eq: Ky Fan maximum principle}, the fact that \(x_W \in \cP_1^{(k)} (V_\bullet(A_1) V_\bullet(A_2))\) for every \(k\)-dimensional subspace \(W\), and the equalities in Eq.~\ref{eq: rewriting of tr form}. \end{proof}

Several remarks are in order. 
\begin{enumerate}    
    \item Although \(\al_{(\ell_1, \ell_2)}^{(k)}\) is given by a maximization over \(k\)-dimensional subspaces, we find it is more amenable to analysis than \(s_k(A_1 + A_2)\). After all, the spectrum of a sum of two projectors is closely related to the relative position of their supports (see, for example, Ref.~\cite{Halmos1969}). This is a crucial ingredient in our proof of Prop.~\ref{prop: bound for alignment terms}. 
    
    \item If \(\cP \subseteq \bR^{2d}\) is a polyhedron that satisfies \(\cP_1^{(k)} (V_\bullet(A_1), V_\bullet(A_2)) \subseteq \cP \subseteq \cP_0^{(k)}\), then the optimal value of the linear program 
    \begin{equation}
    \label{def: lin program upper}
    \begin{aligned}
    &\max \: (\la (A_1) \oplus \la(A_2))^T x\\
    &\: \text{s.t.} \: x \in \cP.
    \end{aligned}
    \end{equation}
    is an upper bound on \(s_k (A_1 + A_2)\) that is at most \(s_k(A_1) + s_k(A_2)\). For example, \(\cP\) could be the polyhedron of points that satisfy the basic constraints and a subset of the alignment constraints. Or \(\cP\) could be equal to \(\cP_1^{(k)} (V_
    \bullet(\tilde{A_1}), V_\bullet (\tilde{A_2}))\) where \(\tilde{A_1}, \tilde{A_2}\) are self-adjoint operators on \(\bC^d\) such that 
    \begin{align}
    \label{eq: alignment comparison}
    \al_{(\ell_1, \ell_2)}^{(k)} (V_\bullet (A_1), V_\bullet(A_2)) \leq \al_{(\ell_1, \ell_2)}^{(k)} (V_\bullet (\tilde{A_1}), V_\bullet(\tilde{A_2}))
    \end{align}
    holds for each pair \((\ell_1, \ell_2) \in [d] \times [d]\). 
    
    \item The upper bound \(u_k (A_1, A_2)\) is in fact independent of the choice of spectral decompositions of \(A_1\) and \(A_2\). To see this, let us suppose that for some \(t \in [d-1]\), \(\la_t(A_1) = \la_{t+1}(A_1)\). Let \(U\) be a unitary operator on \(\bC^d\) that acts as the identity on  \(\langle \{ \xi_t (A_1), \xi_{t+1} (A_1) \} \rangle^{\perp}\) but is otherwise arbitrary. Let \(V'_\bullet (A_1)\) denote the flag
    \begin{align}
    \label{eq: flag action}
      (0 \leq U V_{\da 1} (A_1) \leq U V_{\da 2} (A_1) \leq \cdots \leq U V_{\da d} (A_1) = \bC^d).
    \end{align}
    Notice that for all \(\ell \neq t\), \(U V_{\da \ell} (A_1) = V_{\da \ell} (A_1)\) . This implies that for all \(\ell_1 \neq t\) and \(\ell_2\), it holds that
    \begin{gather}
    \al_{(\ell_1, \ell_2)}^{(k)} (V'_\bullet (A_1), V_\bullet(A_2)) = \al_{(\ell_1, \ell_2)}^{(k)} (V_\bullet (A_1), V_\bullet(A_2)).
    \end{gather}
    Now, suppose that \(x^{(1)} \oplus x^{(2)} \in \cP_1^{(k)} (V_\bullet(A_1), V_\bullet(A_2))\). Let \(x'^{(1)}\) denote the vector with coordinates 
    \begin{align}
    x'^{(1)}_{t} &:= x_t + x_{t+1} - \min(1,x_t + x_{t+1}) \\ 
    x'^{(1)}_{t+1} &:=  \min(1, x_t + x_{t+1})
    \end{align}
    and otherwise \(x'^{(1)}_i := x^{(1)}_i\). The objective value at \(x'^{(1)} \oplus x^{(2)}\) equals the objective value at \(x^{(1)} \oplus x^{(2)}\). It can be verified by inspection that \(x'^{(1)} \oplus x^{(2)}\) satisfies the basic constraints. For \(\ell_1 \neq t\) and every \(\ell_2\), 
    \begin{align}
        (\one_{[\ell_1]} \oplus \one_{[\ell_2]})^T (x'^{(1)} \oplus x^{(2)}) &= (\one_{[\ell_1]} \oplus \one_{[\ell_2]})^T (x^{(1)} \oplus x^{(2)}) \\
        &\leq \al_{(\ell_1, \ell_2)}^{(k)} (V_\bullet (A_1), V_\bullet(A_2)) \\
        &= \al_{(\ell_1, \ell_2)}^{(k)} (V'_\bullet (A_1), V_\bullet(A_2)).
    \end{align}
    As for the alignment constraints associated with pairs of the form \((t, \ell_2)\), we argue in two cases. First, if \(\min(1,x_t + x_{t+1}) = x_t + x_{t+1}\), then 
    \begin{align}
         (\one_{[t]} \oplus \one_{[\ell_2]})^T (x'^{(1)} \oplus x^{(2)}) = (\one_{[t]\setminus\{t\}} \oplus \one_{[\ell_2]})^T (x^{(1)} \oplus x^{(2)}).
    \end{align}
    If \(t = 1\), then 
    \begin{align}
    (\one_{[t]\setminus\{t\}} \oplus \one_{[\ell_2]})^T (x^{(1)} \oplus x^{(2)}) &= \one_{[\ell_2]}^T x^{(2)} \\
    &\leq \min(k, \ell_2) \\
    &\leq \al_{(1, \ell_2)}^{(k)} (V'_\bullet (A_1), V_\bullet(A_2)).
    \end{align}
    Otherwise, 
    \begin{align}
    (\one_{[t]\setminus\{t\}} \oplus \one_{[\ell_2]})^T (x^{(1)} \oplus x^{(2)}) &= (\one_{[t-1]} \oplus \one_{[\ell_2]})^T (x^{(1)} \oplus x^{(2)}) \\
    &\leq \al_{(t-1, \ell_2)}^{(k)} (V_\bullet (A_1), V_\bullet(A_2))\\
    &= \al_{(t-1, \ell_2)}^{(k)} (V'_\bullet (A_1), V_\bullet(A_2)) \\
    &\leq \al_{(t, \ell_2)}^{(k)} (V'_\bullet (A_1), V_\bullet(A_2)).
    \end{align}
Second, if \(\min(1,x_t + x_{t+1}) = 1\), then 
    \begin{align}
    (\one_{[t]} \oplus \one_{[\ell_2]})^T (x'^{(1)} \oplus x^{(2)}) &= (\one_{[t+1]} \oplus \one_{[\ell_2]})^T (x^{(1)} \oplus x^{(2)}) - 1 \\
    &\leq \al_{(t+1, \ell_2)}^{(k)} (V_\bullet (A_1), V_\bullet(A_2)) - 1 \\
    &= \al_{(t+1, \ell_2)}^{(k)} (V'_\bullet (A_1), V_\bullet(A_2)) - 1\\
    &\leq \al_{(t, \ell_2)}^{(k)} (V'_\bullet (A_1), V_\bullet(A_2)).
    \end{align}
    Hence, \(x'^{(1)} \oplus x^{(2)} \in \cP_1^{(k)} (V'_\bullet(A_1), V_\bullet(A_2))\). These considerations extend to cases where there are other degeneracies in either \(A_1\) or \(A_2\). 
\end{enumerate}

\subsection{Total unimodularity of the constraints}
\label{subsec: TU}
A useful observation about the alignment constraints is that vectors of the form \(\one_{[\ell_1]} \oplus \one_{[\ell_2]}\) can be made to have the \textit{consecutive ones property} by a single basis permutation. Let \(\sigma: [d] \rightarrow [d]\) denote the mirror permutation with the action \(\ell \mapsto d - \ell + 1\) for each \(\ell \in [d]\). Applying this permutation to the first \(d\) elements of the basis of \(\bR^{2d}\), corresponding to the first summand, yields 
\begin{align}
  \one_{[\ell_1]} \oplus \one_{[\ell_2]}\ \mapsto \one_{\sigma([\ell_1])} \oplus \one_{[\ell_2]} = (0, \ldots 0, \underbrace{1, \ldots 1}_{\ell_1 \text{times}} \mid \underbrace{1, \ldots 1,}_{\ell_2 \text{times}} 0, \ldots 0).  
\end{align}
The basic constraints are also given by vectors that have the consecutive ones property.

This is significant because a matrix with the consecutive ones property, also known as an interval matrix, is \textit{totally unimodular}. That means the determinant of every one of its square submatrices equals \(+1, 0,\) or \(-1\). This implies that if the alignment terms are integers, then the extremal points of \(\cP_1^{(k)}\) are integral, meaning their coordinates are integers (see~Ch.~19 of Ref.~\cite{Schrijver1998}). From the basic constraints, this means that their coordinates are equal to \(0\) or \(1\). Since a submatrix of a totally unimodular matrix is totally unimodular, the same is true for the extremal points of any polyhedron in \(\cP_0^{(k)}\) that is defined by a subset of the alignment constraints.

\subsection{The meaning of alignment terms}
\label{subsec: alignment terms}
We move on to discussing the meaning of alignment terms. To start, we observe that Fan's maximum principle Eq.~\ref{eq: Ky Fan maximum principle} implies the upper bound 
\begin{align}
\label{eq: upper bound for alpha}
\al_{(\ell_1, \ell_2)}^{(k)} \leq \min(\ell_1, k) + \min(\ell_2, k),
\end{align}
which is satisfied with equality if \(P_{\da \ell_1} (A_1) \) and \(P_{\da \ell_2} (A_2)\) are perfectly aligned. We observe that Lidskii's majorization relation (see Sec.~III.4 of Ref.~\cite{Bhatia1997}) implies the lower bound
\begin{align}
\label{eq: lower bound for alpha}
\al_{(\ell_1, \ell_2)}^{(k)} \geq \min ( k, \max(0, \ell_1 + \ell_2 - d)) + \min(k, \ell_1 + \ell_2). 
\end{align}
When \(\ell_1 + \ell_2 \leq k\), the upper bound and lower bound coincide, implying the identity \(\al_{(\ell_1, \ell_2)}^{(k)} = \ell_1 + \ell_2\). This says that the cases where some information about the operators \(A_1\) and \(A_2\) is retained are the ones where \(\ell_1 + \ell_2 > k\).

If the upper bound Eq.~\ref{eq: upper bound for alpha} is satisfied with equality, then the alignment constraint corresponding to the pair \((\ell_1, \ell_2)\) is redundant within \(\cP_0^{(k)}\). On the other hand, the slackness statement  \(\lceil \al_{(\ell_1, \ell_2)}^{(k)} \rceil < \min(\ell_1, k) + \min(\ell_2, k)\) implies an upper bound on \(s_k (A_1 + A_2)\) with a combinatorial meaning. In words, \(\lceil \al_{(\ell_1, \ell_2)}^{(k)} \rceil - k\) is the maximum number of eigenvalue pairs whose indices are in \([\ell_1]\) and \([\ell_2]\), respectively, that can appear (with coefficient \(1\)) together in the objective \((\la (A_1) \oplus \la(A_2))^T x\). The remaining eigenvalue pairs have to be staggered. Notice that, per the discussion in the previous paragraph, the slackness statement implies \(\ell_1 + \ell_2 > k\).

\begin{prop}
\label{prop: inequality for s_k based on ell_1, ell_2}
Let \(m := \lceil \al_{(\ell_1, \ell_2)}^{(k)} \rceil - k\) and suppose that it satisfies \(m < \min(\ell_1, k) + \min(\ell_2, k) - k\). Denote the tuple of staggered eigenvalue sums
\begin{align}
\label{def: staggered vec}
   \mu := (\la_i (A_1) + \la_{i - m + \ell_2} (A_2))_{i=m+1}^k \oplus (\la_{i-m+\ell_1} (A_1) + \la_i(A_2))_{i=m+1}^k.
\end{align}
Then \(s_k (A_1 + A_2) \leq s_{m} (A_1) + s_{m} (A_2) + s_{k - m} (\mu)\).
\end{prop}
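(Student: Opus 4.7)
The plan is to couple Thm~\ref{thm: linear programming bound} with the total unimodularity observation in Sec.~\ref{subsec: TU}, rounding the alignment bound up to an integer so that the resulting LP reduces to a combinatorial problem. First, I would apply Thm~\ref{thm: linear programming bound} and invoke the second remark that follows it to discard every alignment constraint except the one indexed by \((\ell_1, \ell_2)\), giving
\begin{align*}
s_k(A_1 + A_2) \;\leq\; \max_{x}\; (\la(A_1) \oplus \la(A_2))^T x \quad \text{s.t.} \quad x \in \cP_0^{(k)},\; (\one_{[\ell_1]} \oplus \one_{[\ell_2]})^T x \leq \al_{(\ell_1, \ell_2)}^{(k)}.
\end{align*}
Relaxing the right-hand side from \(\al_{(\ell_1, \ell_2)}^{(k)}\) to the integer \(\lceil \al_{(\ell_1, \ell_2)}^{(k)} \rceil = m + k\) only enlarges the feasible set, and by the total unimodularity of the constraint matrix noted in Sec.~\ref{subsec: TU}, the extremal points of this enlarged polyhedron are integral, and hence \(0/1\) on account of the basic constraints. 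The LP optimum is therefore attained at an indicator vector of a pair of \(k\)-subsets \(S_1, S_2 \subseteq [d]\) satisfying \(|S_1 \cap [\ell_1]| + |S_2 \cap [\ell_2]| \leq m + k\).

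Next, write \(a := |S_1 \cap [\ell_1]|\) and \(b := |S_2 \cap [\ell_2]|\). Because the eigenvalues are weakly decreasing, the objective \(\sum_{i \in S_1}\la_i(A_1) + \sum_{i \in S_2}\la_i(A_2)\) is maximized for fixed \((a,b)\) by the explicit choice \(S_1 = [a] \cup \{\ell_1+1,\dots,\ell_1+k-a\}\) and \(S_2 = [b] \cup \{\ell_2+1,\dots,\ell_2+k-b\}\). Swapping a tail element for an available top element weakly increases the objective, so the optimum lies on the boundary \(a + b = m + k\); the hypothesis \(m + k < \min(\ell_1, k) + \min(\ell_2, k)\) is precisely what is needed to reach this boundary with \(a \leq \min(\ell_1, k)\) and \(b \leq \min(\ell_2, k)\). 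Setting \(r := k - a\) and \(s := k - b\) so that \(r + s = k - m\), and factoring out the leading \(s_m(A_1) + s_m(A_2)\), the value rewrites as
\begin{align*}
s_m(A_1) + s_m(A_2) + \sum_{j=1}^{s}\bigl(\la_{m+j}(A_1) + \la_{\ell_2+j}(A_2)\bigr) + \sum_{j=1}^{r}\bigl(\la_{\ell_1+j}(A_1) + \la_{m+j}(A_2)\bigr),
\end{align*}
where the two trailing sums consist of exactly \(r + s = k - m\) entries of the vector \(\mu\) defined in Eq.~\ref{def: staggered vec}. Since \(s_{k-m}(\mu)\) is by definition the sum of the \(k - m\) largest entries of \(\mu\), the desired bound \(s_k(A_1+A_2) \leq s_m(A_1) + s_m(A_2) + s_{k-m}(\mu)\) follows.

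The main obstacle I anticipate is the boundary book-keeping: justifying that the \(0/1\) LP optimum can be taken on \(a + b = m + k\) under the hypothesis, and checking that the leading-tail segments \(\{\ell_1+1,\dots,\ell_1+k-a\}\) and \(\{\ell_2+1,\dots,\ell_2+k-b\}\) remain inside \([d]\). Both reduce to verifying compatible lower bounds on \(a\) and \(b\) that the three cases (\(\ell_i \geq k\) or \(\ell_i < k\) for \(i=1,2\)) must accommodate, a small case analysis driven by the hypothesis \(m < \min(\ell_1, k) + \min(\ell_2, k) - k\). Once that is pinned down, the identification of the staggered tail contributions with \(k - m\) specific entries of \(\mu\) is merely a matter of re-indexing.
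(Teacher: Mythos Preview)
Your proposal is correct and follows essentially the same route as the paper's proof: restrict to the single alignment constraint with bound \(m+k\), invoke total unimodularity to reduce the LP to an optimization over \(k\)-set pairs \((S_1,S_2)\), argue that the optimum uses consecutive initial segments and saturates the constraint, then re-index to expose \(s_m(A_1)+s_m(A_2)\) plus \(k-m\) entries of \(\mu\). The boundary book-keeping you flag is handled in the paper by noting that \(a+b=m+k\) together with \(a,b\le k\) forces \(a,b\ge m\), and by starting from an arbitrary feasible pair (so the tail segments automatically lie in \([d]\)) before making the WLOG reductions; your anticipated case analysis would arrive at the same conclusions.
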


\begin{proof}
By Thm.~\ref{thm: linear programming bound}, it suffices to show that the optimal value of the linear program
\begin{equation}
\label{def: lin prog for stagg}
\begin{aligned}
&\max \: (\la (A_1) \oplus \la(A_2))^T x\\
&\: \text{s.t.} \: x \in \cP_0^{(k)}, \:(\one_{[\ell_1]} \oplus \one_{[\ell_2]})^T x \leq m+k
\end{aligned}
\end{equation}
is \(s_{m} (A_1) + s_{m} (A_2) + s_{k-m} (\mu)\). By total unimodularity, the extremal points of the feasible set of this program are integral (see Sec.~\ref{subsec: TU}). Hence, it suffices to maximize \((\la (A_1) \oplus \la(A_2))^T (\one_{S_1} \oplus \one_{S_2})\) over subset pairs \((S_1, S_2)\) that satisfy \(S_1, S_2 \subseteq [d], |S_1| = |S_2|=k\) and \(| S_1 \cap [\ell_1]| + | S_2 \cap [\ell_2]| \leq m + k.\) 

Given such a \(k\)-set pair \((S_1, S_2)\), we may assume without loss of generality that \(S_1 \cap [\ell_1] = [r_1]\) and \(S_2 \cap [\ell_2] = [r_2]\) where \(r_1 := |S_1 \cap [\ell_1]|\) and \(r_2 := |S_2 \cap [\ell_2]|\). Otherwise, the intersections \(S_1 \cap [\ell_1]\) and \(S_2 \cap [\ell_2]\) may be adjusted to improve the objective value. By the same token, we may assume that \(S_1 \cap [\ell_1]^{c} = \{\ell_1 + 1, \ldots,\ell_1 + k - r_1\}\) and \(S_2 \cap [\ell_2]^c = \{\ell_2 + 1, \ldots, \ell_2 + k - r_2\}\).  Lastly, we argue that the alignment constraint may be assumed to be satisfied with equality. If \(| S_1 \cap [\ell_1]| + | S_2 \cap [\ell_2]| \neq m + k\), then \(S_1\) may be adjusted to improve the objective value by replacing elements in \(S_1 \cap [\ell_1]^{c}\) with elements in \([\ell_1]\) until the constraint is satisfied with equality or \(S_1 \cap [\ell_1]^{c}\) becomes empty. If the latter happens before the former, \(S_2\) may be adjusted to improve the objective value by replacing elements in \(S_2 \cap [\ell_2]^c\) with elements in \([\ell_2]\) until the constraint is satisfied with equality. Hence, without loss of generality, we assume that \(r_1 + r_2 = m + k\). From the fact that \(\max(r_1, r_2) \leq k\), we conclude that \(m \leq \min(r_1, r_2)\).

With these assumptions on \((S_1, S_2)\), the objective function is
\begin{equation}
\begin{aligned}
    (\la (A_1) \oplus \la(A_2))^T (\one_{S_1} \oplus \one_{S_2}) &= s_{m} (A_1) + s_{m} (A_2) \\
    &+\sum_{i=m+1}^{r_1} \la_i(A_1) +  \sum_{i=1}^{r_1-m} \la_{i+\ell_2}(A_2) \\
    &+  \sum_{i=1}^{k-r_1} \la_{i+\ell_1}(A_1) + \sum_{i=m+1}^{m+k-r_1} \la_i (A_2).
\end{aligned}
\end{equation}
We rewrite the sums in the second and third lines.
\begin{equation}
\begin{aligned}
   &\sum_{i=m+1}^{r_1} \la_i(A_1) +  \sum_{i=1}^{r_1-m} \la_{i+\ell_2}(A_2) = \sum_{i=m+1}^{r_1} \la_i(A_1) +  \la_{i-m+\ell_2}(A_2), \\
   &\sum_{i=1}^{k-r_1} \la_{i+\ell_1}(A_1) + \sum_{i= m+1}^{m+k-r_1} \la_i (A_2)  =  \sum_{i= m+1}^{m+k-r_1}  \la_{i-m+\ell_1}(A_1) + \la_i (A_2) .
\end{aligned}
\end{equation}
Then, by the definition of \(\mu\), it holds that 
\begin{equation}
\begin{aligned}
\sum_{i=m+1}^{r_1} (\la_i(A_1) +  \la_{i-m+\ell_2}(A_2)) &+ \sum_{i= m+1}^{m+k-r_1}  (\la_{i-m+\ell_1}(A_1) + \la_i (A_2))\\
&\leq s_{k-m} (\mu).
\end{aligned}
\end{equation}
Equality is realized by maximizing over \(r_1\). \end{proof}

\subsection{Simultaneously diagonalizable summands}
\label{subsec: tightness}
In this subsection,  we outline a strategy for proving that the linear programming upper bound given by Thm.~\ref{thm: linear programming bound} is tight in the case of simultaneously diagonalizable summands. We complete the proof in App.~\ref{sec: appendix}.

Let \(D_1\) and \(D_2\) be two simultaneously diagonalizable self-adjoint operators on \(\bC^d\). Without loss of generality, we assume that \(D_1\) and \(D_2\) are diagonal in the standard basis \(\{q_i\}_{i=1}^d\). Furthermore, we take \(V_\bullet (D_1)\) and \(V_\bullet(D_2)\) to be complete flags corresponding to diagonal spectral decompositions of \(D_1\) and \(D_2\), respectively. For each \(i \in [d]\), we denote \(\tilde{\la}_i (D_1) := q_i^* D_1 q_i\) and \(\tilde{\la}_i(D_2) := q_i^* D_2 q_i\). To capture the order of the eigenvalues, we define the sets
\begin{gather}
\Og^1_\ell := \{ i \in [d] \mid P_{\da \ell} (D_1) q_i = q_i\}, \Og^2_\ell := \{ i \in [d] \mid P_{\da \ell} (D_2) q_i = q_i\}.
\end{gather}
for each \(\ell \in [d]\). These are the sets of coordinates of the largest \(\ell\) values in \(\tilde{\la} (D_1) := (\tilde{\la}_i(D_1))_{i=1}^d\) and \(\tilde{\la} (D_2) := (\tilde{\la}_i (D_2))_{i=1}^d\), respectively. We use the two subset chains \(\{ \Og^{1}_{\ell_1}\}_{\ell_1 =1}^d\) and \(\{ \Og^{2}_{\ell_2}\}_{\ell_2 =1}^d\) to define two total orders, \(\geq^1\) and \(\geq^2\), on \([d]\). For \(x, x' \in [d]\), we write \(x \geq^1 x'\) if  
\begin{align}
\label{def: induced total order}
\min \{ \ell \mid x \in \Og^1_\ell\} \geq \min \{ \ell \mid x' \in \Og^1_\ell\}.
\end{align}
Similarly, we write \(x \geq^2 x'\) if
\begin{align}
\min \{ \ell \mid x \in \Og^2_\ell\} \geq \min \{ \ell \mid x' \in \Og^2_\ell\}.
\end{align}
Observe that \(x \geq^1 x'\) implies \(\tilde{\la}_x (D_1) \leq \tilde{\la}_{x'} (D_1)\) and \(x \geq^2 x'\) implies \(\tilde{\la}_x (D_2) \leq \tilde{\la}_{x'} (D_2)\).

Now, we compute the alignment terms of \(D_1\) and \(D_2\).
\begin{lem}
\label{lem: explicit form for alignment terms}
For each \( (\ell_1, \ell_2) \in [d] \times [d]\), it holds that
\begin{gather}
\al_{(\ell_1, \ell_2)}^{(k)} (V_{
\bullet} (D_1), V_{\bullet} (D_2))
= \min(k, | \Og^{1}_{\ell_1} \cap \Og^{2}_{\ell_2} |) + \min(k, |\Og^{1}_{\ell_1} \cup \Og^{2}_{\ell_2}|).
\end{gather}
\end{lem}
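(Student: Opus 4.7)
The plan is to reduce the alignment term to a direct spectral computation for a sum of two diagonal projectors. Because $D_1$ and $D_2$ are diagonal in the standard basis $\{q_i\}_{i=1}^d$ and the flags $V_\bullet(D_1), V_\bullet(D_2)$ correspond to diagonal spectral decompositions, the largest eigenvalue projectors are themselves diagonal:
\begin{align}
P_{\da \ell_1}(D_1) = \sum_{i \in \Og^1_{\ell_1}} q_i q_i^*, \qquad P_{\da \ell_2}(D_2) = \sum_{i \in \Og^2_{\ell_2}} q_i q_i^*.
\end{align}
By Def.~\ref{def: alignment term}, the alignment term equals $s_k$ of their sum, so it suffices to read off the spectrum of a diagonal operator.

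Next I would explicitly describe that spectrum. Let $a := |\Og^1_{\ell_1} \cap \Og^2_{\ell_2}|$ and $b := |\Og^1_{\ell_1} \cup \Og^2_{\ell_2}|$. The diagonal entry of $P_{\da \ell_1}(D_1) + P_{\da \ell_2}(D_2)$ at position $i$ is $2$ if $i$ lies in the intersection, $1$ if $i$ lies in the symmetric difference, and $0$ otherwise. Hence, as a multiset, the spectrum consists of $a$ copies of $2$, $b - a$ copies of $1$, and $d - b$ copies of $0$, already in weakly decreasing order.

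The final step is a short case analysis on $k$ to evaluate $s_k$. If $k \leq a$, then $s_k = 2k = \min(k,a) + \min(k,b)$. If $a < k \leq b$, then $s_k = 2a + (k-a) = a + k = \min(k,a) + \min(k,b)$. If $k > b$, then $s_k = 2a + (b-a) = a + b = \min(k,a) + \min(k,b)$. In each case the asserted formula holds, completing the proof.

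There is no real obstacle here; the entire content lies in observing that the hypothesis of simultaneous diagonalizability (combined with the matched choice of flags) trivializes the maximization built into $\al_{(\ell_1,\ell_2)}^{(k)}$, since an orthonormal eigenbasis of the sum is immediately available from the standard basis. The only thing to be careful about is that the flags $V_\bullet(D_1)$ and $V_\bullet(D_2)$ are chosen consistently with the diagonal eigenbasis, which is exactly the assumption made at the top of Sec.~\ref{subsec: tightness}.
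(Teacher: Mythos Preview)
Your proof is correct and essentially identical to the paper's: both observe that the projectors are diagonal in the standard basis, read off the spectrum of their sum as $a$ copies of $2$, $b-a$ copies of $1$, and $d-b$ copies of $0$, and finish with the same three-case analysis on $k$. The only cosmetic difference is your introduction of the abbreviations $a,b$ and your explicit remark about why the diagonal choice of flags is what makes this work.
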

\begin{proof}
Recall that 
\begin{align}
 \al_{(\ell_1, \ell_2)}^{(k)} (V_{\bullet} (D_1), V_{\bullet} (D_2)) &= \max_{\dim(W) =k} \tr (P_W ( P_{\da \ell_1} (D_1) + P_{\da \ell_2} (D_2)))\\
 &= s_k( P_{\da \ell_1} (D_1) + P_{\da \ell_2} (D_2)) \\
 &= s_k( \sum_{i \in \Og^1_{\ell_1}} q_i q_i^* + \sum_{i \in \Og^2_{\ell_2}} q_i q_i^*). 
\end{align}
If \(k \leq |\Og^{1}_{\ell_1} \cap \Og^{2}_{\ell_2}|\), then it equals \(2k\). If \(|\Og^{1}_{\ell_1} \cap \Og^{2}_{\ell_2}| < k < |\Og^{1}_{\ell_1} \cup \Og^{2}_{\ell_2}|\), then it equals \( 2 |\Og^{1}_{\ell_1} \cap \Og^{2}_{\ell_2}| + k - |\Og^{1}_{\ell_1} \cap \Og^{2}_{\ell_2}| = |\Og^{1}_{\ell_1} \cap \Og^{2}_{\ell_2}| + k\). Lastly, if \(k \geq |\Og^{1}_{\ell_1} \cup \Og^{2}_{\ell_2}|\), then it equals \(|\Og^{1}_{\ell_1}| + |\Og^2_{\ell_2}|\) which equals \( |\Og^{1}_{\ell_1} \cap \Og^{2}_{\ell_2}| + |\Og^{1}_{\ell_1} \cup \Og^{2}_{\ell_2}|\). \end{proof}

Since the alignment terms are integers, the coordinates of the extremal points of \(\cP_1^{(k)} (V_{\bullet} (D_1), V_{\bullet} (D_2))\) are in \(\{0,1\}\). Therefore, the linear programming upper bound \(u_k (D_1, D_2)\) equals the optimal value of the following combinatorial optimization problem:
\begin{align}
\label{eq: obj comb}
\max \; (\tilde{\la} (D_1) \oplus \tilde{\la}(D_2))^T (\one_{S_1} \oplus \one_{S_2})
\end{align}
over subsets pairs \((S_1, S_2)\) satisfying \(S_1, S_2 \subseteq [d], |S_1| = |S_2| = k\), and for all \((\ell_1, \ell_2) \in [d] \times [d]\),
\begin{align}
\label{eq: comb align cons}
|S_1 \cap \Og^1_{\ell_1}| + |S_2 \cap \Og^2_{\ell_2}| \leq \min(k, | \Og^{1}_{\ell_1} \cap \Og^{2}_{\ell_2} |) + \min(k, |\Og^{1}_{\ell_1} \cup \Og^{2}_{\ell_2}|).
\end{align}
We call a subset pair \((S_1, S_2)\) that satisfies all of these constraints a \textit{feasible pair}. Notice that the objective here differs from the one in Thm.~\ref{thm: linear programming bound} by a relabeling of the coordinates of the linear form.

Next, we observe that \(s_k(D_1 + D_2)\) equals 
\begin{align}
\label{eq: observation about symmetric set pair}
\max_{S \subseteq [d], |S|=k} \sum_{i \in S} q_i^* (D_1 +  D_2) q_i = \max_{S \subseteq [d], |S|=k} \sum_{i \in S} \tilde{\la}_i (D_1) + \tilde{\la}_i (D_2)
\end{align}
which can be rewritten as follows
\begin{align}
\max_{(S, S), |S|=k} \; (\tilde{\la} (D_1) \oplus \tilde{\la} (D_2))^T (\one_{S} \oplus \one_{S}).
\end{align}                                 
We call set pairs of the form \((S, S)\) \textit{symmetric}. Therefore, to prove that \(u_k (D_1, D_2) = s_k (D_1 + D_2)\), it suffices to show that for an arbitrary feasible pair \((S_1, S_2)\), there exists a feasible symmetric pair \((S, S)\) such that the objective Eq.~\ref{eq: obj comb} at \((S,S)\) is greater than or equal to the objective at \((S_1, S_2)\). 

\begin{thm}
\label{thm: tightness for diagonal operators}
It holds that 
\begin{gather}
u_k (D_1, D_2) = s_k (D_1 + D_2).    
\end{gather}
\end{thm}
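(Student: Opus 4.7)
The aim is to establish $u_k(D_1,D_2) \leq s_k(D_1+D_2)$; combined with Thm.~\ref{thm: linear programming bound}, this yields equality. By the discussion preceding the theorem, it suffices to show that every feasible pair $(S_1,S_2)$ with $|S_1|=|S_2|=k$ admits a size-$k$ subset $S\subseteq [d]$ such that
\[
\sum_{i\in S}\bigl(\tilde{\la}_i(D_1)+\tilde{\la}_i(D_2)\bigr)\;\geq\;\sum_{i\in S_1}\tilde{\la}_i(D_1)+\sum_{i\in S_2}\tilde{\la}_i(D_2),
\]
since symmetric pairs $(S,S)$ automatically satisfy every alignment constraint: each side decomposes additively as $|S\cap \Og^1_{\ell_1}|+|S\cap\Og^2_{\ell_2}|=|S\cap(\Og^1_{\ell_1}\cap\Og^2_{\ell_2})|+|S\cap(\Og^1_{\ell_1}\cup\Og^2_{\ell_2})|$, and each term is bounded by $\min(k,|\cdot|)$ via Lem.~\ref{lem: explicit form for alignment terms}.

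The plan is an exchange argument by induction on $m:=|S_1\setminus S_2|=|S_2\setminus S_1|$. Set $T:=S_1\cap S_2$, $T_1:=S_1\setminus S_2$, $T_2:=S_2\setminus S_1$; the base case $m=0$ is immediate. For $m\geq 1$, each candidate pair $(a,b)\in T_1\times T_2$ affords two natural single-swap moves, each reducing the symmetric difference by two:
\begin{align*}
\text{(M1)}&:\;\bigl((S_1\setminus\{a\})\cup\{b\},\,S_2\bigr),\qquad \Delta\text{obj}=\tilde{\la}_b(D_1)-\tilde{\la}_a(D_1),\\
\text{(M2)}&:\;\bigl(S_1,\,(S_2\setminus\{b\})\cup\{a\}\bigr),\qquad \Delta\text{obj}=\tilde{\la}_a(D_2)-\tilde{\la}_b(D_2).
\end{align*}
I would choose $a\in T_1$ minimizing $\tilde{\la}_a(D_1)$ and $b\in T_2$ maximizing $\tilde{\la}_b(D_1)$ to make (M1) most favorable (and analogously for (M2)). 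In the benign case where at least one of these moves is objective-nondecreasing, I apply it, verify that no alignment constraint becomes violated, and recurse.

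The crux is the adversarial configuration in which every pair $(a,b)\in T_1\times T_2$ satisfies both $\tilde{\la}_a(D_1)>\tilde{\la}_b(D_1)$ and $\tilde{\la}_a(D_2)<\tilde{\la}_b(D_2)$---i.e., $T_1$ strictly dominates $T_2$ in $\ge^1$ while $T_2$ strictly dominates $T_1$ in $\ge^2$. Setting $\ell_1^\star:=\max_{a\in T_1}f_1(a)$ and $\ell_2^\star:=\max_{b\in T_2}f_2(b)$, one then has $T_1\subseteq \Og^1_{\ell_1^\star}\setminus\Og^2_{\ell_2^\star}$ and $T_2\subseteq\Og^2_{\ell_2^\star}\setminus\Og^1_{\ell_1^\star}$, so $T_1,T_2$ populate the two off-diagonal blocks of the partition of $\Og^1_{\ell_1^\star}\cup\Og^2_{\ell_2^\star}$. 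I expect that evaluating the alignment constraint at $(\ell_1^\star,\ell_2^\star)$---using the closed form from Lem.~\ref{lem: explicit form for alignment terms} and the identity $|\Og^1_{\ell_1^\star}\cap\Og^2_{\ell_2^\star}|+|\Og^1_{\ell_1^\star}\cup\Og^2_{\ell_2^\star}|=\ell_1^\star+\ell_2^\star$---either directly contradicts feasibility (because $|S_1\cap\Og^1_{\ell_1^\star}|+|S_2\cap\Og^2_{\ell_2^\star}|$ then contains $2m$ from $T_1\cup T_2$ plus all of $T\cap(\Og^1_{\ell_1^\star}\cup\Og^2_{\ell_2^\star})$, exceeding the RHS) or else pinpoints a specific multi-index exchange that can be used instead of (M1)/(M2).

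The hardest part will be this adversarial case. Establishing it rigorously requires a careful accounting of how the four classes $T$, $T_1$, $T_2$, and $[d]\setminus(S_1\cup S_2)$ distribute across the four regions cut out by $\Og^1_{\ell_1^\star}$ and $\Og^2_{\ell_2^\star}$, together with a case split on whether $|\Og^1_{\ell_1^\star}\cap\Og^2_{\ell_2^\star}|$ or $|\Og^1_{\ell_1^\star}\cup\Og^2_{\ell_2^\star}|$ saturates the $\min(k,\cdot)$ clipping. A secondary but nontrivial concern is ensuring that the chosen swap does not tighten some previously slack alignment constraint into a violation; I anticipate handling this by breaking ties in the choice of $a,b$ according to their memberships in the relevant $\Og$-sets, falling back on a two-step compensating swap when necessary. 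Eigenvalue degeneracies in $D_1,D_2$ are a bookkeeping addendum handled as in Remark~3 of Sec.~\ref{sec: theory}.
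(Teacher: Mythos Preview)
Your proposal is a sketch with two genuine gaps, and neither is a bookkeeping detail. First, the ``benign'' swap need not preserve feasibility. If (M1) is objective-nondecreasing, then $\tilde{\la}_b(D_1)\geq\tilde{\la}_a(D_1)$, so (generically) $f_1(b)<f_1(a)$; hence for every $\ell_1$ with $f_1(b)\le\ell_1<f_1(a)$ the quantity $|S_1'\cap\Og^1_{\ell_1}|$ increases by one, and any alignment constraint at such $(\ell_1,\ell_2)$ that was tight for $(S_1,S_2)$ is now violated. In other words, the objective-improving direction in $S_1$ is exactly the direction that pushes against the alignment constraints, so ``tie-breaking'' and unspecified ``compensating swaps'' are not enough---you would need a structural argument explaining why some simultaneous move always exists, and none is offered. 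Second, the adversarial case does not automatically contradict feasibility at $(\ell_1^\star,\ell_2^\star)$. Writing out the left side gives $2m+|T\cap(\Og^1_{\ell_1^\star}\cap\Og^2_{\ell_2^\star})|+|T\cap(\Og^1_{\ell_1^\star}\cup\Og^2_{\ell_2^\star})|$, and comparing with Lem.~\ref{lem: explicit form for alignment terms} one finds that a violation requires $m>|(\Og^1_{\ell_1^\star}\cap\Og^2_{\ell_2^\star})\setminus(S_1\cup S_2)|$; nothing in the adversarial hypothesis prevents the intersection from containing many elements outside $S_1\cup S_2$. Your own hedge (``either \ldots\ or else pinpoints a specific multi-index exchange'') concedes this.

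The paper avoids both issues by abandoning swaps entirely. Fixing $S_1$, it \emph{constructs} $S$ directly: for each $i'\in[k]$ it takes $r_{i'}$ minimal such that the family of constraints $i+i'\le k+|\Og^1_{m_i}\cap\Og^2_{r}|$ (over all $i\in[k]$) holds, and sets $y_{i'}$ to be the element of rank $r_{i'}$ in the $\geq^2$ order. Feasibility of $(S_1,S)$ and the inequality over $(S_1,S_2)$ then follow from the minimality of the $r_{i'}$ (Lemmas~\ref{lem: feasiblity of S}--\ref{lem: optimal S}). The second step, showing $(S,S)$ dominates $(S_1,S)$, is the delicate part: it is a strong induction (Lem.~\ref{lem: the other foot}) proving $|S\cap\Og^1_{m_i}|\ge i$ for every $i$, exploiting the equalities forced by minimality of $r_{i'}$. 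This construction-then-verify architecture is what makes the argument go through; an exchange argument along your lines would need a comparably strong invariant, and the proposal does not supply one.
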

\begin{proof}
Deferred to appendix App.~\ref{sec: appendix}. 
\end{proof}

\section{Applications to sums of tensor products}
\label{sec: applications to sums}
We use theorems Thm.~\ref{thm: linear programming bound} and Thm.~\ref{thm: tightness for diagonal operators} to prove our main theorem Thm.~\ref{thm: sep Fan majorization}.

Fix two positive integers \(d_B\) and  \(d_C\). Let \(B_1, B_2\) be positive semi-definite operators on \(\bC^{d_B}\) and \(C_1, C_2\) be positive semi-definite operators on \(\bC^{d_C}\). The goal is to prove the majorization relation
\begin{align}
\label{rel: sep fan again}
B_1 \otimes C_1 + B_2 \otimes C_2 \preceq B_1^{\da} \otimes C_1^{\da} + B_2^{\da} \otimes C_2^{\da}. 
\end{align}
We start with an example showing that the assumption of positive semi-definiteness cannot be relaxed in general. 
\begin{example}
\label{ex: counter example with negative eigenvalues}
Consider the following operators on $\bC^2$: \(\tilde{B}_{1} := q_1 q_1^* + q_2 q_2^* =: - \tilde{B}_{2}, \tilde{C}_1 := q_{1} q_{1}^*,\) and \(\tilde{C}_2 := q_{2} q_{2}^*\). The spectrum of the sum \(\tilde{B}_1 \otimes \tilde{C}_1 + \tilde{B}_2 \otimes \tilde{C}_2\) is \((1,1,-1,-1)\). On the other hand, the aligned sum \(\tilde{B}_1^\da \otimes \tilde{C}_1^\da + \tilde{B}_2^{\da} \otimes \tilde{C}_2^{\da}\) is null, and so its spectrum is \((0,0,0,0) \not\succeq (1,1,-1,-1)\).
\end{example}

\subsection{Ordering the spectrum}
\label{subsec: order of spectrum of product} 
Let \(\tilde{B}\) be a positive semi-definite operator on \(\bC^{d_B}\) and \(\tilde{C}\) be a positive semi-definite operator on \(\bC^{d_C}\). Their tensor product \(\tilde{B} \otimes \tilde{C}\) has the spectral decomposition
\begin{gather}
\sum_{(i,j) \in [d_B] \times [d_C]} \la_i (\tilde{B}) \la_j (\tilde{C}) \Pi_i (\tilde{B}) \otimes \Pi_j (\tilde{C}).
\end{gather}
The order of the eigenvalues of this operator is not determined entirely by the orders of the eigenvalues of \(\tilde{B}\) and \(\tilde{C}\). In general, for \((i, j), (i', j') \in [d_B] \times [d_C]\), whether the inequality 
\begin{gather}
\la_i (\tilde{B}) \la_j (\tilde{C}) \leq \la_{i'} (\tilde{B}) \la_{j'} (\tilde{C})
\end{gather}
holds or not depends on the eigenvalues in question, not only their order. Here is what we can conclude however. Since the eigenvalues are non-negative, if \(i \geq i'\) and \(j \geq j'\), then the inequality holds. That is to say, the orders of the eigenvalues of \(\tilde{B}\) and \(\tilde{C}\) induce a partial order on the eigenvalues of \(\tilde{B} \otimes \tilde{C}\), namely the product order. We denote this relation by \(\geq^\times\).

To account for the remaining order relations implied by the spectra of \(\tilde{B}\) and \(\tilde{C}\), we associate with \(\tilde{B} \otimes \tilde{C}\) a chain of downward closed sets in \(([d_B] \times [d_C], \geq^\times)\) 
\begin{align}
\label{chain}
\Ups_{1} = \{(1,1)\} \subset \Ups_{2} \subset \cdots \subset \Ups_{d_B d_C} = [d_B] \times [d_C]    
\end{align} 
such that for each \(\ell \in [d_B d_C]\), we have 
\begin{align}
\label{eq: downward closed projector}
V_{\da \ell} (\tilde{B} \otimes \tilde{C}) = \langle \{ \xi_i (\tilde{B}) \otimes \xi_j (\tilde{C}) \mid (i, j) \in \Ups_{\ell}\} \rangle.
\end{align}
Observe that this implies \(|\Ups_{\ell}| = \ell\).

That such a chain exists can be proven by induction. The chain in Eq.~\ref{chain} induces a total order on \([d_B] \times [d_C]\). For \((i,j), (i', j') \in [d_B] \times [d_C]\), \((i',j')\) precedes \((i, j)\) in this order if 
\begin{align}
\min \{ \ell \mid (i, j) \in \Ups_\ell\} \geq \min \{ \ell \mid (i', j') \in \Ups_\ell\}.
\end{align}
Observe that if this holds, then \(\la_i (\tilde{B}) \la_j (\tilde{C}) \leq \la_{i'} (\tilde{B}) \la_{j'} (\tilde{C})\). The figure Fig.~\ref{fig: Hasse diagram for 2by2} depicts a Hasse diagram for \(([2] \times [2], \geq^\times)\) along with a total order induced by a chain of downward closed sets.

Let \(\{ \Ups^{1}_{\ell} \}_{\ell=1}^{d_B d_C}\) and \(\{ \Ups^{2}_{\ell} \}_{\ell=1}^{d_B d_C}\) denote the two chains of downward closed sets associated with \(B_1 \otimes C_1\) and \(B_2 \otimes C_2\), respectively.

\begin{figure}[t]
\centering
\begin{tikzpicture}[scale=.5]
  \node (top) at (0,2.5) {\((2,2)\)};
  \node (a) at  (3,0) {\((1,2)\)};
  \node (d) at (-3,0) {\((2,1)\)};
  \node (bottom) at (0,-2.5) {\((1,1)\)};
  \draw (bottom)--(a);
  \draw (bottom) -- (d);
  \draw (top) -- (a);  
  \draw (top) -- (d);
  \draw [->, violet, opacity=0.8, very thick] plot [smooth, tension=1] coordinates { (0,-2) (2,0) (-2,0) (0,2)};
\end{tikzpicture}
\caption{{A Hasse diagram for \(([2] \times [2], \geq^\times)\). The curved arrow indicates a total order on \([2] \times [2]\) induced by the chain \(\{(1,1)\} \subset \{(1,1), (1,2)\} \subset \{(1,1),(1,2),(2,1)\} \subset \{(1,1), (1,2),(2,1),(2,2)\}\). According to this total order, \((1,2)\) precedes \((2,1)\), though the two are incomparable in the product order. This is the order of the eigenvalues of \(\tilde{B} \otimes \tilde{C}\) when, for example, \(\tilde{B} = 3 q_1 q_1^* + q_2 q_2^*\) and \(\tilde{C} = 3 q_1 q_1^* + 2 q_2 q_2^*\).}}
\label{fig: Hasse diagram for 2by2}
\end{figure}
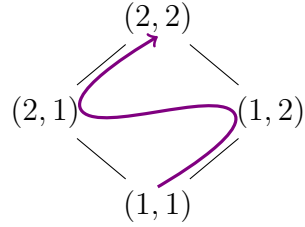

\subsection{Bounding the alignment terms} 
\label{subsec: bounding alignment}

To bound the alignment term \(\al_{(\ell_1, \ell_2)}^{(k)} (V_\bullet (B_1 \otimes C_1), V_
\bullet(B_2 \otimes C_2))\), we examine spaces spanned by product vectors indexed by elements of downward closed sets. Upon our examination, we find a generalization of a well known fact about subspace intersections. Specifically, that for any two orthonormal bases \(\{f_i\}_{i=1}^d\) and \(\{g_i\}_{i=1}^d\) for \(\bC^d\), and for any two downward closed sets \(Y, Y' \subseteq [d]\) (that is, both are intervals that start at \(1\)), it holds that 
\begin{align}
\label{eq: dim inequality for line}
\dim ( \langle \{ f_i\}_{i \in Y} \rangle \cap \langle \{g_i\}_{i \in Y'} \rangle^{\perp}) \geq | Y \setminus Y'|.
\end{align}

Let each of \(\{f_i^{(B)}\}_{i=1}^{d_B}\) and \(\{ g_i^{(B)}\}_{i=1}^{d_B}\) be an orthonormal basis for \(\bC^{d_B}\), and each of \(\{f_j^{(C)}\}_{j=1}^{d_C}\) and \(\{ g_j^{(C)}\}_{j=1}^{d_C}\) be an orthonormal basis for \(\bC^{d_C}\). For the sake of readability, we denote for \(i \in [d_B]\),
\begin{align}
F^{(B)}_{\da i} := \langle \{f^{(B)}_1, \ldots, f^{(B)}_{i}\} \rangle \, , \, G^{(B)}_{\ua i} := \langle \{g_i^{(B)}, \ldots, g_{d_B}^{(B)}\} \rangle,   
\end{align}
for \(j \in [d_C]\), 
\begin{align}
F^{(C)}_{\da j} := \langle \{f^{(C)}_1, \ldots, f^{(C)}_{j}\} \rangle \, , \, G^{(C)}_{\ua j} := \langle \{ g^{(C)}_{j}, \ldots, g^{(C)}_{d_C} \} \rangle,
\end{align}
and, finally, for a subset \(S \subseteq [d_B] \times [d_C]\),
\begin{align}
F_S := \langle \{ f_i^{(B)} \otimes f_j^{(C)} \mid (i, j) \in S \}\rangle, \\
G_S := \langle \{ g_i^{(B)} \otimes g_j^{(C)} \mid (i, j) \in S \}\rangle.
\end{align}

\begin{prop}
\label{prop: orthogonal tensor products} 
Let \(\Ups\) and \(\Ups'\) be two downward closed sets in \(([d_B] \times [d_C], \geq^\times)\). Then
\begin{align}
\label{eq: dim inequality for plane}
\dim(F_{\Ups} \cap {G_{\Ups'}}^\perp) \geq |\Ups \setminus \Ups'|.    
\end{align}
\end{prop}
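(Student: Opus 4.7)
I would prove the bound by induction on \(|\Ups'|\), using the one-dimensional inequality (Eq.~\ref{eq: dim inequality for line}) as the engine. The base case \(\Ups' = \varnot\) is immediate: \(G_{\Ups'} = \{0\}\) gives \(F_{\Ups} \cap G_{\Ups'}^\perp = F_{\Ups}\), with dimension \(|\Ups| = |\Ups \setminus \Ups'|\).

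For the inductive step, pick a maximal element \((i'_0, j'_0) \in \Ups'\). By downward closure of \(\Ups'\), the set \(\Ups'^{-} := \Ups' \setminus \{(i'_0, j'_0)\}\) is downward closed, and the orthogonal decomposition \(G_{\Ups'} = G_{\Ups'^{-}} \oplus \langle g_{i'_0}^{(B)} \otimes g_{j'_0}^{(C)} \rangle\) gives
\[
F_{\Ups} \cap G_{\Ups'}^\perp \;=\; \bigl(F_{\Ups} \cap G_{\Ups'^{-}}^\perp\bigr) \cap \bigl\langle g_{i'_0}^{(B)} \otimes g_{j'_0}^{(C)} \bigr\rangle^\perp,
\]
so intersecting with one more hyperplane drops dimension by at most one. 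The inductive hypothesis provides \(\dim(F_{\Ups} \cap G_{\Ups'^{-}}^\perp) \geq |\Ups \setminus \Ups'^{-}|\). When \((i'_0, j'_0) \in \Ups\), the right-hand side equals \(|\Ups \setminus \Ups'| + 1\), and the possible drop of one is already absorbed.

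The delicate case is \((i'_0, j'_0) \notin \Ups\), where \(|\Ups \setminus \Ups'^{-}| = |\Ups \setminus \Ups'|\) and I must argue that the added constraint does \emph{not} decrease dimension at all. This reduces to the key lemma \(g_{i'_0}^{(B)} \otimes g_{j'_0}^{(C)} \in F_{\Ups}^\perp + G_{\Ups'^{-}}\). The structural input is that \((i'_0, j'_0) \notin \Ups\) combined with downward closure of \(\Ups\) confines \(\Ups\) to the L-shape \(\{(i,j) : i < i'_0 \text{ or } j < j'_0\}\) around the missing corner, while downward closure of \(\Ups'\) places every boundary point in \(\{(i', j'_0) : i' < i'_0\} \cup \{(i'_0, j') : j' < j'_0\}\) into \(\Ups'^{-}\). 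Using these boundary elements I would construct \(v \in G_{\Ups'^{-}}\) explicitly so that \(P_{F_{\Ups}}(g_{i'_0}^{(B)} \otimes g_{j'_0}^{(C)} - v) = 0\); this amounts to solving a linear system whose equations are indexed by \(\Ups\) and whose unknowns are indexed by \(\Ups'^{-}\), with the L-shape structure ensuring that the right-hand side lies in the column span.

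The main obstacle I foresee is verifying the key lemma for all orthonormal bases, not just generic ones. In the generic case a direct computation produces \(v\) in closed form; for arbitrary bases, the function \((\text{bases}) \mapsto \dim(F_{\Ups} \cap G_{\Ups'}^\perp)\) is upper-semi-continuous and attains its minimum on a dense open set, so any lower bound established generically extends to all bases.
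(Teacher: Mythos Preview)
Your inductive-plus-genericity strategy is viable and genuinely different from the paper's direct argument, but the crux---what you call the key lemma, that in the hard case $g_{i'_0}^{(B)}\otimes g_{j'_0}^{(C)}\in F_{\Ups}^{\perp}+G_{\Ups'^{-}}$---is asserted rather than proved. Saying ``the L-shape structure ensures the right-hand side lies in the column span'' is a restatement of the claim: the linear system has $|\Ups|$ equations in $|\Ups'^{-}|$ unknowns, and $|\Ups|$ can be arbitrarily larger than $|\Ups'^{-}|$, so consistency is precisely the nontrivial content. A clean way to fill the gap generically is to seek $g_{i'_0}^{(B)}\otimes g_{j'_0}^{(C)}-v$ of \emph{product} form. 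From $(i'_0,j'_0)\notin\Ups$ and downward closure you get $F_{\Ups}^{\perp}\geq (F^{(B)}_{\da\,i'_0-1})^{\perp}\otimes(F^{(C)}_{\da\,j'_0-1})^{\perp}$; from maximality of $(i'_0,j'_0)$ in $\Ups'$ you get $[i'_0]\times[j'_0]\setminus\{(i'_0,j'_0)\}\subseteq\Ups'^{-}$. The one-dimensional inequality then furnishes nonzero $u^{(B)}\in\langle g_1^{(B)},\ldots,g_{i'_0}^{(B)}\rangle\cap(F^{(B)}_{\da\,i'_0-1})^{\perp}$ and likewise $u^{(C)}$; for generic bases their $g_{i'_0}^{(B)}$- and $g_{j'_0}^{(C)}$-coefficients are nonzero, so after scaling so that these coefficients multiply to $1$, the vector $v:=g_{i'_0}^{(B)}\otimes g_{j'_0}^{(C)}-u^{(B)}\otimes u^{(C)}$ lies in $G_{\Ups'^{-}}$ while $u^{(B)}\otimes u^{(C)}\in F_{\Ups}^{\perp}$. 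Your semicontinuity argument (which is correct: $\dim$ of an intersection is upper-semicontinuous, so a lower bound on a dense set propagates) then carries the bound to all bases.

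By contrast, the paper's proof avoids both induction and genericity. It passes to $\Delta=\Ups\setminus\Ups'$, records the two order-theoretic facts $\da\Delta\cap\ua\Delta=\Delta$ and $\Delta_1\times\Delta_2\subseteq\da\Delta\cup\ua\Delta$, and builds a single auxiliary tensor-product subspace $W=W^{(B)}\otimes W^{(C)}$ via an Amir--Mo\'ez flag-intersection lemma; the bound then drops out of inclusion--exclusion on $\Delta_1\times\Delta_2$ inside $W$. That argument works uniformly for all bases in one stroke. Your route is more elementary in that it bypasses Amir--Mo\'ez, but it trades that for the generic-then-limit detour and still owes an actual proof of the key lemma.
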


\begin{proof} 
Observe that \({G_{\Ups'}}^\perp = G_{{\Ups'}^c}\). Denote the difference \(\Delta := \Ups \setminus \Ups'\). Since \(\da \Delta \subseteq \Ups\) and \(\ua \Delta \subseteq {\Ups'}^c\), 
it suffices to prove \(\dim ( F_{\da \Delta} \cap G_{\ua \Delta}) \geq |\Delta|.\) We do this by showing that there exists a subspace whose dimension is not so large, yet the dimensions of its intersections with \(F_{\da \Delta}\) and \(G_{\ua \Delta}\) are relatively large. 

As the difference of two downward closed sets, \(\Delta\) is interval-like. That is, if \(x,y \in \Delta\) and \(x \leq^\times z \leq^\times y\) for some \(z \in [d_B] \times [d_C]\), then \(z \in \Delta\). This implies \({\da \Delta} \cap {\ua \Delta} = \Delta\). Let \(\Delta_1 := \{i_1, \ldots, i_{T}\} \subseteq [d_B], \Delta_2 := \{ j_1, \ldots, j_{L}\} \subseteq [d_C]\) denote the sets of first coordinates and second coordinates, respectively, of the elements of \(\Delta\).  Let \((\tilde{i},\tilde{j})\) be an element of \(\Delta_1 \times \Delta_2\). There exists \(j' \in [d_C]\) such that \((\tilde{i},j') \in \Delta\). If \(j' \geq \tilde{j}\), then \((\tilde{i},\tilde{j}) \in  {\da \Delta}\). Otherwise, \( (\tilde{i}, \tilde{j}) \in {\ua \Delta}\). Therefore, the inclusion \(\Delta_1 \times \Delta_2 \subseteq  {\da \Delta} \cup {\ua \Delta}\) holds. See Fig.~\ref{fig: Hasse diagram for 3by3 with delta} for an illustrative example. 

\begin{figure}
\centering
\begin{tikzpicture}[scale=.5]
  \node (22) at (0,3) {\((2,2)\)};
  \node (12) at  (2.7,0) {\((1,2)\)};
  \node (21) at (-2.7,0) {\((2,1)\)};
  \node (11) at (0,-2.5) {\((1,1)\)};
  \node (13) at (5,3) {\(\mathbf{(1,3)}\)};
  \node (31) at (-5,3) {\(\mathbf{(3,1)}\)};
  \node (23) at (2.7,6) {\((2,3)\)};
  \node (32) at (-2.7,6) {\((3,2)\)};
  \node (33) at (0, 8.5) {\((3,3)\)};
  \draw[brown, very thick, <-] (11) -- (12);
  \draw[brown, very thick, <-] (11) -- (21);
  \draw (12) -- (22);  
  \draw (21) -- (22);
  \draw[brown, very thick, <-] (12) -- (13);
  \draw[brown, very thick, <-] (21) -- (31);
  \draw (22) -- (23);
  \draw[blue, very thick, ->] (13) -- (23);
  \draw (22) -- (32);
  \draw[blue, very thick, ->] (31) -- (32);
  \draw[blue, very thick, ->] (23) -- (33);
  \draw[blue, very thick, ->] (32) -- (33);
  \node[circle] (c) at (0,0){};  
\end{tikzpicture}
\caption{\small{A Hasse diagram for \(([3] \times [3], \geq^\times)\). In this example, the set \(\Delta\) (in bold) is \(\{(1,3), (3,1)\}\). The downward closure \({\da\Delta} = \{ (1,3), (1,2), (1,1), (2,1), (3,1)\}\) is indicated by the downward facing arrows, while the upward closure \({\ua\Delta} = \{ (1,3),(2,3), (3,3), (3,2), (3,1)\}\) is indicated by the upward facing arrows. In this example, the product \(\Delta_1 \times \Delta_2\) is  \(\{ (1,3), (1,1), (3,1), (3,3)\}\).}} 
\label{fig: Hasse diagram for 3by3 with delta}
\end{figure}
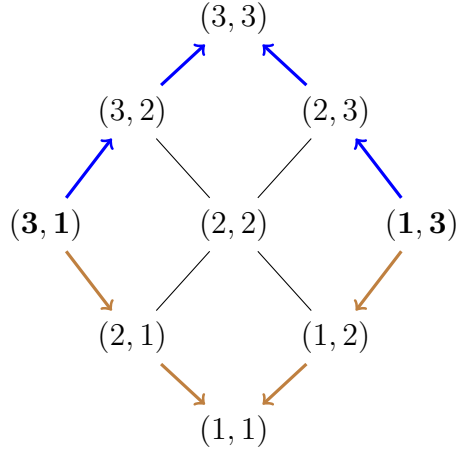

By a dimension counting argument due to Amir-Moéz (see Lem~6.1 in Ref.~\cite{Ando_1989} and Sec.~III.3 in Ref.~\cite{Bhatia1997}), there exist an orthonormal set \(\{\tilde{f}^{(B)}_{i_t}\}_{t=1}^T\)  satisfying 
\(\tilde{f}^{(B)}_{i_t} \in F^{(B)}_{\da i_t}\) for each \(t \in [T]\) and an orthonormal set \(\{ \tilde{g}^{(B)}_{i_t} \}_{t=1}^T\) satisfying 
\(\tilde{g}^{(B)}_{i_t} \in G^{(B)}_{\ua i_t}\) for each \(t \in [T]\) such that 
\begin{align}
\label{eq: W_1 }
\langle \{\tilde{f}^{(B)}_{i_t}\}_{t=1}^T \rangle = \langle \{ \tilde{g}^{(B)}_{i_t}\}_{t=1}^T \rangle=: W^{(B)}  
\end{align}
By the same token, there exist an orthonormal set \(\{\tilde{f}^{(C)}_{j_\ell}\}_{\ell=1}^L\)  satisfying \(\tilde{f}^{(C)}_{j_\ell} \in F^{(C)}_{\da j_\ell}\) for each \(\ell \in [L]\) and an orthonormal set \(\{ \tilde{g}^{(C)}_{j_\ell} \}_{\ell=1}^L\) satisfying \(\tilde{g}^{(C)}_{j_\ell} \in G^{(C)}_{\ua j_\ell}\) for each \(\ell \in [L]\) such that 
\begin{align}
\label{eq: W_2}
\langle \{\tilde{f}^{(C)}_{j_\ell}\}_{\ell=1}^L \rangle = \langle \{ \tilde{g}^{(C)}_{j_\ell}\}_{\ell=1}^L \rangle=: W^{(C)}.
\end{align}
The subspace we seek is the tensor product \(W := W^{(B)} \otimes W^{(C)}\).

For \((i, j) \in {\da \Delta}\), the inclusion \(\da\{ (i, j)\} \subseteq {\da \Delta}\) implies \(F_{\da \{(i,j)\}} \leq F_{\da \Delta}\). Since \(F_{\da i}^{(B)} \otimes F_{\da j}^{(C)} = F_{\da \{(i,j)\}}\), the membership \((i_t, j_\ell) \in (\Delta_1 \times \Delta_2) \cap {\da \Delta} \) implies \(\tilde{f}^{(B)}_{i_t} \otimes \tilde{f}^{(C)}_{j_\ell} \in F_{\da \Delta}\). This implies 
\begin{gather}
 \dim(F_{\da \Delta} \cap W) \geq |(\Delta_1 \times \Delta_2) \cap {\da \Delta}|.   
\end{gather}
Similar considerations for the other basis yield  
\begin{gather}
\dim( G_{\ua \Delta} \cap W) \geq | (\Delta_1 \times \Delta_2) \cap {\ua \Delta}|.     
\end{gather}
Putting it altogether, we obtain the following estimate.
\begin{align}
   \dim (F_{\da \Delta} \cap G_{\ua \Delta}) &\geq \dim ((F_{\da \Delta} \cap W) \cap (G_{\ua \Delta} \cap W)) \notag\\
   &\geq \dim(F_{\da \Delta} \cap W) + \dim(G_{\ua \Delta} \cap W) - \dim(W) \notag \\
   &\geq |(\Delta_1 \times \Delta_2) \cap {\da \Delta}| + | (\Delta_1 \times \Delta_2) \cap {\ua \Delta}| - |\Delta_1 \times \Delta_2| \notag \\
   &= |\Delta|, 
\end{align} 
where for the equality we used inclusion-exclusion and the facts \(\Delta_1 \times \Delta_2 \subseteq  {\da \Delta} \cup {\ua \Delta}\) and \({\da \Delta} \cap {\ua \Delta} = \Delta \subseteq \Delta_1 \times \Delta_2\). \end{proof}

With this in hand, we can obtain tight upper bounds for the alignment terms of tensor products of positive semi-definite operators. 

\begin{prop}
\label{prop: bound for alignment terms}
For each pair \((\ell_1, \ell_2) \in [d_B d_C] \times [d_B d_C]\),
\begin{align*}
\al_{(\ell_1, \ell_2)}^{(k)} (V_{\bullet}(B_1 \otimes C_1), V_{\bullet}(B_2 \otimes C_2)) \leq \al_{(\ell_1, \ell_2)}^{(k)} (V_{
\bullet} (B_1^{\da} \otimes C_1^{\da}), V_{\bullet} (B_2^{\da} \otimes C_2^{\da})).
\end{align*}
\end{prop}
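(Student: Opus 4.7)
The plan is to make both sides concrete. Since $B_i^{\da} \otimes C_i^{\da}$ is diagonal in the product standard basis and shares its spectrum with $B_i \otimes C_i$, its largest-$\ell_i$ eigenvalue space is spanned by the standard product basis vectors indexed by the downward closed set $\Ups^i_{\ell_i}$ associated in Sec.~\ref{subsec: order of spectrum of product} with $B_i \otimes C_i$. Lemma~\ref{lem: explicit form for alignment terms} therefore evaluates the right hand side as $\min(k, a^*) + \min(k, u^*)$, where $a^* := |\Ups^1_{\ell_1} \cap \Ups^2_{\ell_2}|$ and $u^* := |\Ups^1_{\ell_1} \cup \Ups^2_{\ell_2}|$. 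For the left hand side, I would set $P_1 := P_{\da \ell_1}(B_1 \otimes C_1)$ and $P_2 := P_{\da \ell_2}(B_2 \otimes C_2)$; letting $F$ and $G$ denote the product bases built from the eigenvectors of $(B_1, C_1)$ and $(B_2, C_2)$ respectively, $P_1$ projects onto $F_{\Ups^1_{\ell_1}}$ and $P_2$ onto $G_{\Ups^2_{\ell_2}}$. The target reduces to $s_k(P_1 + P_2) \leq \min(k, a^*) + \min(k, u^*)$.

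I would then decompose the spectrum of $P_1 + P_2$ via the standard orthogonal decomposition of a sum of two projectors: set $\mathcal{H}_{11} := F_{\Ups^1_{\ell_1}} \cap G_{\Ups^2_{\ell_2}}$ (eigenvalue $2$, multiplicity $a$), $\mathcal{H}_{10} := F_{\Ups^1_{\ell_1}} \cap G_{\Ups^2_{\ell_2}}^{\perp}$ and $\mathcal{H}_{01} := G_{\Ups^2_{\ell_2}} \cap F_{\Ups^1_{\ell_1}}^{\perp}$ (eigenvalue $1$, combined multiplicity $m_1$), $\mathcal{H}_{00} := F_{\Ups^1_{\ell_1}}^{\perp} \cap G_{\Ups^2_{\ell_2}}^{\perp}$ (eigenvalue $0$), and a complementary subspace on which $P_1 + P_2$ has $r$ pairs of eigenvalues $1 + c_i, 1 - c_i$ with $c_i \in (0, 1)$. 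Taking traces gives $2a + m_1 + 2r = \ell_1 + \ell_2$. The key input is Prop.~\ref{prop: orthogonal tensor products} applied twice, yielding $\dim \mathcal{H}_{10} \geq |\Ups^1_{\ell_1} \setminus \Ups^2_{\ell_2}|$ and $\dim \mathcal{H}_{01} \geq |\Ups^2_{\ell_2} \setminus \Ups^1_{\ell_1}|$, hence $m_1 \geq u^* - a^*$. Combined with the trace identity, $2(a + r) = \ell_1 + \ell_2 - m_1 \leq 2 a^*$, so $a + r \leq a^*$: the number of eigenvalues of $P_1 + P_2$ strictly greater than $1$ is at most $a^*$.

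The proof then closes by a short case analysis on $k$. When $k \leq a^*$, the trivial bound $s_k(P_1 + P_2) \leq 2k$ already matches $\min(k, a^*) + \min(k, u^*) = 2k$. When $a^* < k \leq u^*$, the chain $a + r \leq a^* < k \leq u^* \leq \ell_1 + \ell_2 - a = a + r + m_1$ places $k - a - r$ in the interval $[1, m_1]$, so the top $k$ eigenvalues of $P_1 + P_2$ consist of $a$ twos, the $r$ values $1 + c_i$, and $k - a - r$ ones; summing and using $\sum_i c_i \leq r$ gives $s_k = a + k + \sum_i c_i \leq a + r + k \leq a^* + k$, as required. When $k > u^*$, the trace bound $s_k \leq \ell_1 + \ell_2 = a^* + u^*$ suffices. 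The only real obstacle in the argument is securing the dimensional lower bound $m_1 \geq u^* - a^*$, and this is precisely the geometric content delivered by Prop.~\ref{prop: orthogonal tensor products}, which is why that dimension-counting fact is positioned as the central ingredient of the theory.
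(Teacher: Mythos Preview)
Your argument is correct and follows the same route as the paper: both apply Prop.~\ref{prop: orthogonal tensor products} twice to lower-bound the multiplicity of the eigenvalue $1$ in $P_1+P_2$ by $|\Ups^1_{\ell_1}\setminus\Ups^2_{\ell_2}|+|\Ups^2_{\ell_2}\setminus\Ups^1_{\ell_1}|$ and then derive the $s_k$-inequality, with the paper compressing your three-case analysis into the single majorization $\la(P_1+P_2)\preceq(2,\ldots,2,1,\ldots,1,0,\ldots,0)$ (with $a^*$ twos and $u^*-a^*$ ones). One minor algebraic slip: in the middle case you write $\ell_1+\ell_2-a=a+r+m_1$, but the trace identity gives $\ell_1+\ell_2-a=a+2r+m_1$; the bound you actually need, $k\le a+r+m_1$, still follows via $k\le u^*=\ell_1+\ell_2-a^*\le \ell_1+\ell_2-(a+r)=a+r+m_1$.
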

\begin{proof}
By the previous proposition, \(V_{\da \ell_1} (B_1 \otimes C_1)\) contains a subspace of dimension \(|\Ups^{1}_{\ell_1} \setminus \Ups^{2}_{\ell_2}|\) that is orthogonal to \(V_{\da \ell_2} (B_2 \otimes C_2)\). By the same token, \(V_{\da \ell_2} (B_2 \otimes C_2)\) contains a subspace of dimension \(|\Ups^{2}_{\ell_2} \setminus \Ups^{1}_{\ell_1}|\) that is orthogonal to \(V_{\da \ell_1} (B_1 \otimes C_1)\). Hence, the multiplicity of \(1\) in the spectrum of \(P_{\da \ell_1} (B_1 \otimes  C_1) + P_{\da \ell_2} (B_2 \otimes C_2)\) is at least \(|\Ups^{1}_{\ell_1} \setminus \Ups^{2}_{\ell_2}| +  |\Ups^{2}_{\ell_2} \setminus \Ups^{1}_{\ell_1}|\). The sum of remainder of the eigenvalues is \(2 | \Ups^{1}_{\ell_1} \cap \Ups^{2}_{\ell_2}|\), and so  
\begin{align}
\label{eq: rel for sum of proj}
\la(P_{\da \ell_1} (B_1 \otimes  C_1) + P_{\da \ell_2} (B_2 \otimes C_2)) \preceq (2, \ldots,2, 1, \ldots, 1, 0, \ldots, 0),
\end{align}
where on the right hand side the multiplicity of \(2\) is \(| \Ups^{1}_{\ell_1} \cap \Ups^{2}_{\ell_2}|\), and the multiplicity of \(1\) is \(|\Ups^{1}_{\ell_1} \setminus \Ups^{2}_{\ell_2}| +  |\Ups^{2}_{\ell_2} \setminus \Ups^{1}_{\ell_1}|\). The proposition follows after observing that the right hand side of Eq.~\ref{eq: rel for sum of proj} equals \(\la(P_{\da \ell_1} (B_1^{\da} \otimes  C_1^{\da}) + P_{\da \ell_2} (B_2^{\da} \otimes C_2^{\da}))\).\qedhere
\end{proof}

We are now ready to prove Thm.~\ref{thm: sep Fan majorization}. 
\begin{proof} [Proof of Thm.~\ref{thm: sep Fan majorization}]
An immediate corollary of Prop.~\ref{prop: bound for alignment terms} is that 
\begin{gather}\cP_1^{(k)} (V_{\bullet} (B_1 \otimes C_1), V_{\bullet}(B_2 \otimes C_2)) \subseteq \cP_1^{(k)} (V_{\bullet} (B_1^{\da} \otimes C_1^{\da}), V_{\bullet} (B_2^{\da} \otimes C_2^{\da})).
\end{gather}
This implies that the corresponding linear programming bounds satisfy 
\begin{gather}
    u_k (B_1 \otimes C_1, B_2 \otimes C_2) \leq u_k (B_1^\da \otimes C_1^\da, B_2^\da \otimes C_2^\da).
\end{gather}
Then
\begin{align}
    s_k (B_1 \otimes C_1 + B_2 \otimes C_2) &\leq u_k (B_1 \otimes C_1, B_2 \otimes C_2) \\
    &\leq u_k (B_1^\da \otimes C_1^\da, B_2^\da \otimes C_2^\da) \\
    &= s_k (B_1^\da \otimes C_1^\da + B_2^\da \otimes C_2^\da),
\end{align}
where the equality follows from that fact that \(B_1^\da \otimes C_1^\da\) and \(B_2^\da \otimes C_2^\da\) are simultaneously diagonalizable and Thm.~\ref{thm: tightness for diagonal operators}.
\end{proof}

\subsection[Resolving the 2-letter spin alignment conjecture]{Resolving the \(2\)-letter spin alignment conjecture}
\label{subsec: spin align with 2 letters}
We use Eq.~\ref{eq: sep Fan majorization relations} to resolve the \(n=2\) case of the spin alignment conjecture. 
\begin{prop}
\label{prop: spin alignment n=2}
Let \(M\) denote a quantum state on \(\bC^d\) and \(p\) denote a probability distribution on the subsets of \(\{1,2\}\). Then for every pure quantum state tuple \((1, \psi_{\{1\}}, \psi_{\{2\}}, \psi_{\{1,2\}})\) and every unit vector \(v_M\) satisfying \(M v_M = \la_1(M) v_M\), it holds that
\begin{align}
\label{maj: n=2} 
\sum_{J \subseteq \{1,2\}} p_{J}\psi_{J} \otimes M^{\otimes J^c} \preceq \sum_{J \subseteq \{1,2\}} p_{J}({v_M {v_M}^*})^{\otimes J}  \otimes M^{\otimes J^c}.
\end{align} 
\end{prop}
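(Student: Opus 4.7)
The plan is to split Eq.~\ref{maj: n=2} into two pieces by isolating the $J = \{1, 2\}$ summand. Let
\begin{align*}
Y &:= p_{\varnot} M \otimes M + p_{\{1\}} \psi_{\{1\}} \otimes M + p_{\{2\}} M \otimes \psi_{\{2\}}, \\
Y_{\star} &:= p_{\varnot} M \otimes M + p_{\{1\}} (v_M v_M^*) \otimes M + p_{\{2\}} M \otimes (v_M v_M^*),
\end{align*}
so that the LHS of Eq.~\ref{maj: n=2} is $Y + p_{\{1,2\}} \psi_{\{1,2\}}$ and the RHS is $Y_{\star} + p_{\{1,2\}} (v_M v_M^*)^{\otimes 2}$. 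I would first establish the operator majorization $Y \preceq Y_{\star}$ by writing $Y$ as a sum of two tensor products of positive semi-definite operators, namely $Y = (p_{\varnot} M + p_{\{1\}} \psi_{\{1\}}) \otimes M + (p_{\{2\}} M) \otimes \psi_{\{2\}}$, and applying Thm.~\ref{thm: sep Fan majorization}:
\begin{align*}
Y \preceq (p_{\varnot} M + p_{\{1\}} \psi_{\{1\}})^{\da} \otimes M^{\da} + (p_{\{2\}} M)^{\da} \otimes \psi_{\{2\}}^{\da}.
\end{align*}
The right-hand side is diagonal in the standard basis of $\bC^{d} \otimes \bC^{d}$, with multiset of diagonal entries $\{A(z)_{(i,j)}\}_{(i,j) \in [d] \times [d]}$ where $A(u)_{(i,j)} := u_i \la_j(M) + p_{\{2\}} \la_i(M) \delta_{j,1}$ and $z := \la(p_{\varnot} M + p_{\{1\}} \psi_{\{1\}})$. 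Similarly, $Y_{\star}$ is unitarily equivalent to the diagonal operator with entries $\{A(z')_{(i,j)}\}$, where $z'_i = p_{\varnot} \la_i(M) + p_{\{1\}} \delta_{i,1}$ is already sorted decreasingly. Ky Fan's majorization relation Eq.~\ref{eq: Ky Fan majorization relation with da} applied on system $1$ gives $z \preceq z'$; it remains to show that this implies the multiset majorization $\{A(z)_{(i,j)}\} \preceq \{A(z')_{(i,j)}\}$.

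I would deduce this multiset majorization by partitioning the entries into the column-$1$ entries $\{z_i \la_1(M) + p_{\{2\}} \la_i(M)\}_{i \in [d]}$---a $d$-vector that is sorted decreasingly---and the remaining $d(d-1)$ entries $\{z_i \la_j(M)\}_{i \in [d], j \in \{2, \ldots, d\}}$, which form the tensor outer product $z \otimes (\la_2(M), \ldots, \la_d(M))$ as a multiset. For the first piece, non-negative scaling preserves majorization ($z \la_1(M) \preceq z' \la_1(M)$) and the common sorted shift $p_{\{2\}} \la(M)$ can be added to both sides without breaking the relation; for the second piece, majorization is preserved under tensor product with a non-negative vector (via the Hardy-Littlewood-Polya doubly stochastic representation). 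Finally, the union of two majorization-comparable multisets remains majorization-comparable, as seen from the identity $s_k(X \cup Y) = \max_a (s_a(X) + s_{k-a}(Y))$. Combining these facts yields $\{A(z)_{(i,j)}\} \preceq \{A(z')_{(i,j)}\}$ and hence $Y \preceq Y_{\star}$.

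To conclude LHS $\preceq$ RHS, I would exploit the rank-$1$ structure of the remaining summand. Since $p_{\{1,2\}} \psi_{\{1,2\}}$ has spectrum $(p_{\{1,2\}}, 0, \ldots, 0)$, Ky Fan's relation yields $s_k(\text{LHS}) \leq s_k(Y) + p_{\{1,2\}}$ for every $k \geq 1$. In the $M$-eigenbasis (with $v_M$ identified with $q_1$), $Y_{\star}$ is diagonal, and the $(1,1)$ position attains its largest entry $p_{\varnot} \la_1(M)^2 + (p_{\{1\}} + p_{\{2\}}) \la_1(M)$; consequently $v_M \otimes v_M$ lies in the top eigenspace of $Y_{\star}$, so $s_k(\text{RHS}) = s_k(Y_{\star}) + p_{\{1,2\}}$ for $k \geq 1$. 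Combined with $s_k(Y) \leq s_k(Y_{\star})$ this yields $s_k(\text{LHS}) \leq s_k(\text{RHS})$ for all $k \geq 1$; the case $k = 0$ is trivial and $k = d^2$ follows from both traces equaling $1$. The main obstacle is the multiset majorization $A(z) \preceq A(z')$: each ingredient---preservation of majorization under scalar multiplication, under componentwise addition of a commonly sorted vector, under tensor product by a non-negative vector, and under multiset union---is standard, but the four need to be assembled correctly. Degeneracies in the top eigenspace of $M$ pose no difficulty, since any choice of $v_M$ places $v_M \otimes v_M$ in the top eigenspace of $Y_{\star}$.
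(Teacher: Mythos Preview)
Your proposal is correct and follows essentially the same route as the paper: reduce to \(Y \preceq Y_\star\) using that \(Y_\star\) is perfectly aligned with the rank-\(1\) term \(p_{\{1,2\}}(v_M v_M^*)^{\otimes 2}\), apply Thm.~\ref{thm: sep Fan majorization} to \(Y\), and then split along the second tensor factor (your column \(j=1\) versus \(j\geq 2\) partition is exactly the paper's decomposition \(M=\lambda_1(M)\,v_M v_M^*+\tilde M\)) before invoking Fan plus tensor-product majorization on each piece. Your multiset/diagonal-entry phrasing makes explicit the ``common sorted shift'' and ``multiset union'' steps that the paper handles more tersely via operator orthogonality and a reference to Bondar.
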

\begin{proof}
Since the two operators \(p_{\varnot}  M \otimes M + p_{\{1\}} v_M v_M^* \otimes M + p_{\{2\}} M \otimes v_M v_M^*\) and \(p_{\{1,2\}} v_M v_M^* \otimes v_M v_M^*\) are perfectly aligned, it suffices to prove that \(p_{\varnot}  M \otimes M + p_{\{1\}} v_M v_M^* \otimes M + p_{\{2\}} M \otimes v_M v_M^*\) majorizes \(p_{\varnot}  M \otimes M + p_{\{1\}} \psi_1 \otimes M + p_{\{2\}} M \otimes \psi_2\). 

Combining the first two terms of the latter and using Eq.~\ref{eq: sep Fan majorization relations} yields
\begin{equation}
\label{eq:app of sep fan}
\begin{aligned}
( p_{\varnot}  M +p_{\{1\}} \psi_1)\otimes M &+ p_{\{2\}} M \otimes \psi_2 \preceq \\
&( p_{\varnot}  M +p_{\{1\}} \psi_1)^{\da} \otimes M^{\da} + p_{\{2\}} M^\da \otimes \psi_2^{\da}.
\end{aligned}
\end{equation}
The right hand side of this relation is unitarily equivalent to 
\begin{align}
\label{eq: half aligned in proof}
(p_{\varnot}  M +p_{\{1\}} \psi_1)^{\da} \otimes M +  p_{\{2\}} M^\da \otimes v_M v_M^*.
\end{align}
Define \(\tilde{M} := M - \la_1 (M) v_M v_M^*\). Expanding the second factor of the first term in Eq.~\ref{eq: half aligned in proof} yields
\begin{align}
(p_{\varnot}  M +p_{\{1\}} \psi_1)^{\da} \otimes \tilde{M} + (\la_1(M) (p_{\varnot}  M +p_{\{1\}} \psi_1)^{\da} +  p_{\{2\}} M^\da) \otimes v_M v_M^*.
\end{align}

Since the two terms are orthogonal, we can treat them separately (compare with~Prop III.8 in Ref.~\cite{Alhejji2024}). Fan's majorization relation Eq.~\ref{eq: Ky Fan majorization relation with da} and, for example, Theorem~2.2 in Ref.~\cite{Bondar2003} imply
\begin{align}
(p_{\varnot}  M +p_{\{1\}} \psi_1)^{\da} \otimes \tilde{M} \preceq (p_{\varnot}  M^\da +p_{\{1\}} \psi_1^\da) \otimes \tilde{M}.
\end{align}
The right hand side of this relation is unitarily equivalent to \( (p_{\varnot}  M +p_{\{1\}} v_M v_M^*) \otimes \tilde{M}\). By the same token, it holds that
\begin{equation}
\begin{aligned}
(\la_1(M) (p_{\varnot}  M &+p_{\{1\}} \psi_1)^{\da}  +  p_{\{2\}} M^\da) \otimes v_M v_M^* \\
&\preceq (\la_1(M) (p_{\varnot}  M^\da +p_{\{1\}} \psi_1^\da) +  p_{\{2\}} M^\da) \otimes v_M v_M^*.
\end{aligned}
\end{equation}
The right hand side of this relation is unitarily equivalent to 
\begin{align}
(\la_1(M) (p_{\varnot}  M +p_{\{1\}} v_M v_M^*) +  p_{\{2\}} M) \otimes v_M v_M^*. &\qedhere
\end{align}
\end{proof}

Therefore, by the reduction in Sec.~V of Ref.~\cite{Leditzky_2023}, the coherent information of any platypus channel is additive to the \(2\)-letter level.
\begin{cor}
\label{cor: platypus channel additivity}
If \(\cM\) is a platypus channel, then \(I^{\text{(coh)}}({\mathcal{M}}^{\otimes 2})=2 I^{\text{(coh)}} ({\mathcal{M}})\).
\end{cor}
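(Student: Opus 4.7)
The plan is to assemble the corollary from two ingredients: Prop.~\ref{prop: spin alignment n=2} of the present paper and the reduction established in Section~V of Ref.~\cite{Leditzky_2023}. That reduction asserts that if the spin-alignment entropy inequality Eq.~\ref{eq: align entropy} holds at the $n$-letter level for a platypus channel $\cM$, then the coherent information of $\cM$ is additive at the $n$-letter level. So the corollary reduces to verifying Eq.~\ref{eq: align entropy} at $n=2$.

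For that verification, I would apply Prop.~\ref{prop: spin alignment n=2}, which yields the stronger operator-majorization relation
\begin{align}
\sum_{J \subseteq \{1,2\}} p_{J}\psi_{J} \otimes M^{\otimes J^c} \preceq \sum_{J \subseteq \{1,2\}} p_{J}(v_M v_M^*)^{\otimes J}  \otimes M^{\otimes J^c}
\end{align}
for every probability distribution $p$ on the subsets of $\{1,2\}$ and every tuple of pure states $(\psi_J)_{J \subseteq \{1,2\}}$. Since majorization of self-adjoint operators implies the reverse inequality for any Schur-concave spectral functional, and the von Neumann entropy $H$ is such a functional (the function $-x \log_2 x$ being concave), the $n=2$ instance of Eq.~\ref{eq: align entropy} follows immediately.

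Combining these two steps establishes $I^{\text{(coh)}}(\cM^{\otimes 2}) \leq 2 I^{\text{(coh)}}(\cM)$. The matching lower bound $I^{\text{(coh)}}(\cM^{\otimes 2}) \geq 2 I^{\text{(coh)}}(\cM)$ was already noted in Sec.~\ref{sec: intro} as a direct consequence of the additivity of von Neumann entropy under tensor products of states, evaluated on a product input. Together these yield the claimed equality.

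The hard part is not in this corollary at all; it has already been paid for in proving Prop.~\ref{prop: spin alignment n=2} via the separable Fan majorization relation Thm.~\ref{thm: sep Fan majorization}. The only subtleties at this stage are bookkeeping: correctly invoking the reduction of Ref.~\cite{Leditzky_2023} without re-proving it, and tracking the direction of the majorization-to-entropy implication so that the operator inequality in Prop.~\ref{prop: spin alignment n=2} is converted into the entropy lower bound required by Eq.~\ref{eq: align entropy}.
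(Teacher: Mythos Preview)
Your proposal is correct and follows essentially the same route as the paper: invoke Prop.~\ref{prop: spin alignment n=2}, pass from majorization to the entropy inequality Eq.~\ref{eq: align entropy} via Schur-concavity of $H$, and then apply the reduction of Ref.~\cite{Leditzky_2023}. The paper compresses all of this into a single sentence preceding the corollary, leaving the majorization-to-entropy step implicit (it was flagged earlier when Eq.~\ref{rel: spin alignment maj} was called the ``stronger conjecture''), so your write-up is simply a more explicit version of the same argument.
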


\section{Concluding remarks}
\label{sec: conclusion}
The obstacle to generalizing Eq.~\ref{eq: sep Fan majorization relations} to three or more factors is that the proof of Prop.~\ref{prop: orthogonal tensor products} does not apply to cases with more than two factors.

The set of triples
\begin{gather}
\{(\la(A_1), \la(A_2), \la(A_1 + A_2)) \mid A_1, A_2 \; \text{self-adjoint on} \; \bC^{d_B} \otimes \bC^{d_C}\}    
\end{gather}
is a polyhedron that is completely described by Horn's inequalities \cite{Bhatia2001}. These include the inequalities given by Fan's majorization relation. The new majorization relation Eq.~\ref{eq: sep Fan majorization relations} gives new inequalities that account for the tensor product structure. Majorization relations involving tensor products feature in the theory of catalytic majorization \cite{Aubrun2008}. The relationship between the spectrum of a bipartite self-adjoint operator and the spectrum of one of its partial traces is explored in Ref.~\cite{DAFTUAR200580}. Connections between enumerative geometry, Horn's inequalities and the quantum marginal problem are described in Ref.~\cite{Knutson2009}.

Finally, here is an approach to proving Eq.~\ref{eq: sep Fan majorization relations} that does not work. Observe that the majorization relation
\begin{align}
    B_1^\da \otimes C_1 + B_2^\da \otimes C_2 \preceq  B_1^\da \otimes C_1^\da + B_2^\da \otimes C_2^\da
\end{align}
holds. Since majorization is transitive, Eq.~\ref{eq: sep Fan majorization relations} would be implied by the majorization relation
\begin{align}
\label{eq: wrong maj rel}
B_1 \otimes C_1 + B_2 \otimes C_2 \preceq^?  B_1^\da \otimes C_1 + B_2^\da \otimes C_2.
\end{align}
However, this latter majorization relation does not hold in general. 
\begin{example}
\label{ex: counter example to one da}
Consider the case where \(d_B = d_C = 2\), and denote the superposition \(e := \frac{1}{\sqrt{2}} (q_1 + q_2)\). Let \(\tilde{B}_1 := e e^* =:\tilde{C}_1\), \(\tilde{B}_2 := 2 q_1 q_1^* + q_2 q_2^*\) and \(\tilde{C}_2 := q_1 q_1^*\). Then
\begin{gather}
\tilde{B}_1 \otimes \tilde{C_1} + \tilde{B}_2 \otimes \tilde{C}_2 = e e^* \otimes e e^* + 2 q_1 q_1^* \otimes q_1 q_1^* + q_2 q_2^* \otimes q_1 q_1^*
\end{gather}
and 
\begin{gather}
\tilde{B}^\da_1 \otimes \tilde{C_1} + \tilde{B}^\da_2 \otimes \tilde{C}_2 = q_1 q_1^* \otimes e e^* +  2 q_1 q_1^* \otimes q_1 q_1^* + q_2 q_2^* \otimes q_1 q_1^*.
\end{gather}
It can be verified that 
\begin{gather}
s_2 (\tilde{B}^\da_1 \otimes \tilde{C_1} + \tilde{B}^\da_2 \otimes \tilde{C}_2) - s_2(\tilde{B}_1 \otimes \tilde{C_1} + \tilde{B}_2 \otimes \tilde{C}_2) < -0.05.    
\end{gather}
\end{example}
The relation Eq.~\ref{eq: wrong maj rel} holds in cases where \(B_1\) and \(B_2\) are both rank-\(1\). This follows from a majorization relation for separable quantum states due to Nielsen and Kempe \cite{Nielsen2001}. 

\subsubsection*{Acknowledgments}
I thank Manny Knill for many hours of stimulating and encouraging discussions, and Milad Marvian for pushing me to find counterexamples. I am grateful to Felix Leditzky for comments that improved the presentation of the material. This work was supported by National Science Foundation Grant PHY-2116246.


\appendix
\section{Remainder of the proof of Thm.~\ref{thm: tightness for diagonal operators}}
\label{sec: appendix}
Following the strategy outlined in Sec.~\ref{subsec: tightness}, we show that for every feasible pair, there exists a symmetric feasible pair that does at least as well. It is useful to observe that for every \((S_1,S_2)\) such that \(S_1, S_2 \subseteq [d]\) and \(|S_1|=|S_2|=k\), in addition to the basic constraints, certain alignment constraints are satisfied as well. 
\begin{lem}
\label{lem: some align are ok}
For a \(k\)-set pair \((S_1, S_2)\) and \((\ell_1, \ell_2) \in [d] \times [d]\), 
\begin{align}
|S_1 \cap \Og^1_{\ell_1}| + |S_2 \cap \Og^2_{\ell_2}| \leq \min (2 k, |\Og^{1}_{\ell_1} \cap \Og^{2}_{\ell_2}| + |\Og^{1}_{\ell_1} \cup  \Og^{2}_{\ell_2}|).
\end{align}
\end{lem}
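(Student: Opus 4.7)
The plan is to observe that the right-hand side decomposes into two easy bounds, each obtained from one of the two trivial containments satisfied by any intersection $S_i \cap \Og^i_{\ell_i}$. Concretely, $S_1 \cap \Og^1_{\ell_1} \subseteq S_1$ gives $|S_1 \cap \Og^1_{\ell_1}| \le k$, and $S_1 \cap \Og^1_{\ell_1} \subseteq \Og^1_{\ell_1}$ gives $|S_1 \cap \Og^1_{\ell_1}| \le \ell_1$, with analogous statements on the second factor. The first pair of bounds, added, yields the $2k$ term in the minimum. For the other term, I would use the inclusion--exclusion identity
\begin{equation*}
|\Og^1_{\ell_1} \cap \Og^2_{\ell_2}| + |\Og^1_{\ell_1} \cup \Og^2_{\ell_2}| = |\Og^1_{\ell_1}| + |\Og^2_{\ell_2}| = \ell_1 + \ell_2,
\end{equation*}
where the last equality uses that $\Og^i_\ell$ is by definition the index set of the $\ell$ largest diagonal entries, hence has cardinality $\ell$. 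The $\ell_1 + \ell_2$ upper bound then follows from $|S_1 \cap \Og^1_{\ell_1}| \le \ell_1$ and $|S_2 \cap \Og^2_{\ell_2}| \le \ell_2$.

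Combining, the inequality $|S_1 \cap \Og^1_{\ell_1}| + |S_2 \cap \Og^2_{\ell_2}| \le \min(2k, \ell_1 + \ell_2)$ matches the desired bound via the inclusion--exclusion identity above. There is no real obstacle here; the content of the lemma is just a cleaner repackaging of these two one-line bounds in a form that makes the comparison with the alignment constraints in Eq.~\ref{eq: comb align cons} transparent. The point is that for a generic $k$-set pair $(S_1,S_2)$, one automatically gets the weaker constraint with $|\Og^1_{\ell_1} \cup \Og^2_{\ell_2}|$ in place of $\min(k,|\Og^1_{\ell_1} \cup \Og^2_{\ell_2}|)$ and $|\Og^1_{\ell_1} \cap \Og^2_{\ell_2}|$ in place of $\min(k,|\Og^1_{\ell_1} \cap \Og^2_{\ell_2}|)$; the genuine alignment constraint is strictly stronger precisely when $k < |\Og^1_{\ell_1} \cup \Og^2_{\ell_2}|$ or $k < |\Og^1_{\ell_1} \cap \Og^2_{\ell_2}|$, and it is exactly in that regime that the subsequent swapping argument in the appendix will need to do real work.
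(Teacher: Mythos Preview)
Your proof is correct and essentially identical to the paper's: both split the minimum into the two bounds $|S_1 \cap \Og^1_{\ell_1}| + |S_2 \cap \Og^2_{\ell_2}| \le |S_1| + |S_2| = 2k$ and $|S_1 \cap \Og^1_{\ell_1}| + |S_2 \cap \Og^2_{\ell_2}| \le |\Og^1_{\ell_1}| + |\Og^2_{\ell_2}| = |\Og^1_{\ell_1} \cap \Og^2_{\ell_2}| + |\Og^1_{\ell_1} \cup \Og^2_{\ell_2}|$ via inclusion--exclusion. One small quibble with your closing commentary (not the proof itself): the alignment constraint is strictly stronger than the lemma's bound precisely when $|\Og^1_{\ell_1} \cap \Og^2_{\ell_2}| < k < |\Og^1_{\ell_1} \cup \Og^2_{\ell_2}|$, i.e.\ an ``and'' rather than an ``or''; this is exactly the regime the paper isolates in Eq.~\ref{eq: comb align cons reduced}.
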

\begin{proof}
The fact that \(|S_1|=|S_2|=k\) implies \(|\Og^{1}_{\ell_1} \cap S_1| + |\Og^{2}_{\ell_2} \cap S_2| \leq 2 k\). The other inequality follows from inclusion-exclusion:
\[ |\Og^{1}_{\ell_1} \cap S_1| + |\Og^{2}_{\ell_2} \cap S_2| \leq |\Og^{1}_{\ell_1}| + |\Og^{2}_{\ell_2}| = |\Og^{1}_{\ell_1} \cap \Og^{2}_{\ell_2}| + |\Og^{1}_{\ell_1} \cup  \Og^{2}_{\ell_2}|.\qedhere \] \end{proof}
Therefore, since we are optimizing over \(k\)-set pairs \((S_1, S_2)\), we need only be concerned with constraints of the form
\begin{align}
\label{eq: comb align cons reduced}
|S_1 \cap \Og^1_{\ell_1}| + |S_2 \cap \Og^2_{\ell_2}| \leq | \Og^{1}_{\ell_1} \cap \Og^{2}_{\ell_2} | + k.
\end{align}

We start by showing that every symmetric pair of \(k\)-sets is feasible. This shows explicitly that the feasible set is not empty. 
\begin{lem}
\label{lem: every sym pair is feasible}
If \(S \subseteq [d]\) satisfies \(|S| = k\), then \((S, S)\) is a feasible pair. 
\end{lem}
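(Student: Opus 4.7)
The plan is to verify the alignment constraints in Eq.~\ref{eq: comb align cons} directly for the pair $(S, S)$ using only inclusion-exclusion and the trivial intersection bound $|S \cap T| \leq \min(|S|, |T|)$. Since the basic constraints $|S| = |S| = k$ are satisfied by assumption, only the alignment constraints require checking.

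Fix $(\ell_1, \ell_2) \in [d] \times [d]$. The key identity is
\begin{align}
|S \cap \Og^1_{\ell_1}| + |S \cap \Og^2_{\ell_2}| = |S \cap (\Og^{1}_{\ell_1} \cap \Og^{2}_{\ell_2})| + |S \cap (\Og^{1}_{\ell_1} \cup \Og^{2}_{\ell_2})|,
\end{align}
which is inclusion-exclusion applied inside $S$. Now bound each summand on the right by the trivial estimate: $|S \cap (\Og^{1}_{\ell_1} \cap \Og^{2}_{\ell_2})| \leq \min(|S|, |\Og^{1}_{\ell_1} \cap \Og^{2}_{\ell_2}|) = \min(k, |\Og^{1}_{\ell_1} \cap \Og^{2}_{\ell_2}|)$, and similarly $|S \cap (\Og^{1}_{\ell_1} \cup \Og^{2}_{\ell_2})| \leq \min(k, |\Og^{1}_{\ell_1} \cup \Og^{2}_{\ell_2}|)$. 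Adding these two bounds yields precisely the right-hand side of Eq.~\ref{eq: comb align cons} (equivalently, the right-hand side of the expression in Lem.~\ref{lem: explicit form for alignment terms}). Hence $(S, S)$ satisfies every alignment constraint and is therefore feasible.

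There is no real obstacle here; the lemma is essentially a one-line consequence of inclusion-exclusion. The main conceptual point worth emphasizing in the writeup is that feasibility of a symmetric pair $(S, S)$ does \emph{not} rely on any special structure of $S$ beyond $|S| = k$, which is what makes symmetric pairs a rich enough family to serve as candidates for beating arbitrary feasible pairs in the optimization described in Sec.~\ref{subsec: tightness}.
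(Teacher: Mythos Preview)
Your proof is correct and essentially identical to the paper's: both apply inclusion--exclusion inside $S$ to rewrite $|S \cap \Og^1_{\ell_1}| + |S \cap \Og^2_{\ell_2}|$ as $|S \cap (\Og^{1}_{\ell_1} \cap \Og^{2}_{\ell_2})| + |S \cap (\Og^{1}_{\ell_1} \cup \Og^{2}_{\ell_2})|$ and then bound each term by the trivial $\min(k,\cdot)$ estimate.
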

\begin{proof}
Since \(|S| = k\), the vector \(\one_S \oplus \one_S\) satisfies the basic constraints. For \((\ell_1, \ell_2) \in [d] \times [d]\), using inclusion-exclusion and the fact that intersections distribute over unions yields
\begin{align*}
|S \cap \Og^1_{\ell_1}| + |S \cap \Og^2_{\ell_2}| &= | (S \cap \Og^1_{\ell_1}) \cap (S \cap \Og^2_{\ell_2})| + | (S \cap \Og^1_{\ell_1}) \cup (S \cap \Og^2_{\ell_2})| \\
&= | S \cap (\Og^1_{\ell_1} \cap \Og^2_{\ell_2})| + | S \cap (\Og^1_{\ell_1} \cup \Og^2_{\ell_2})| \\
&\leq  \min(k, | \Og^{1}_{\ell_1} \cap \Og^{2}_{\ell_2} |) + \min(k, |\Og^{1}_{\ell_1} \cup \Og^{2}_{\ell_2}|). \qedhere \end{align*} \end{proof}

Let \((S_1, S_2)\) be an arbitrary feasible pair. We argue into two steps that it may be adjusted to a symmetric pair without decreasing the objective value. In the first step, we fix \(S_1\) and optimize the second argument. Specifically, we find the optimal \(k\)-set \(S \subseteq [d]\) such that \((S_1, S)\) is a feasible pair. In the second step, we show that the objective value at \((S, S)\) is greater than or equal to the objective value at \((S_1, S)\).

\subsection{First step}
\label{subsubsec: first}

Denote the elements of \(S_1\) with \( x_1, \ldots, x_k\) such that \(x_k \geq^1 \cdots \geq^1 x_2 \geq^1 x_1\). For each \(i \in [k]\), let \(m_{i} \in [d]\) be minimal such that \(x_i \in \Og^1_{m_i}\). That means \(m_i\) is the order of \(x_i\) according to \(\geq^1\). This implies \(\Og^1_{m_1} \subset \Og^1_{m_2} \subset \cdots \subset \Og^1_{m_k}\). 

Let \(i' \in [k]\) be given. Keeping in mind the constraints in Eq.~\ref{eq: comb align cons reduced}, we consider the set of all \(r \in [d]\) such that for all \(i \in [k]\), the constraint 
\begin{align}
\label{eq: constraint for first step}
i + i' \leq k + | \Og^1_{m_i} \cap \Og^2_{r}|
\end{align}
is satisfied. This set is not empty because for all \(i \in [k]\), the intersection \(\Og^1_{m_i} \cap \Og^2_{d} = {\Og^1_{m_i}} \cap {[d]} = \Og^1_{m_i}\), and so \(k + | \Og^1_{m_i} \cap \Og^2_{d}| = k + m_i \geq i' + i\). Define \(r_{i'}\) to be minimal in this set.

\begin{lem}
\label{lem: minimality implies equalities}
For each \(i' \in [k]\), there exists some \(i \in [k]\) such that 
\begin{gather}
i + i' = k + | \Og^1_{m_i} \cap \Og^2_{r_{i'}}|.
\end{gather}
\end{lem}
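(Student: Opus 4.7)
The plan is to leverage the minimality of $r_{i'}$ together with the fact that $|\Omega^1_{m_i} \cap \Omega^2_r|$ changes by at most $1$ as $r$ increments by $1$. To state this uniformly even when $r_{i'} = 1$, I would first adopt the convention $\Omega^2_0 := \varnot$, so that the chain $\{\Omega^2_r\}$ extends to $r=0$ and the constraint in Eq.~\ref{eq: constraint for first step} makes sense at $r = r_{i'} - 1$ for every value of $r_{i'} \in [d]$.

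From the definition of $r_{i'}$, the set of $r$ satisfying Eq.~\ref{eq: constraint for first step} for all $i \in [k]$ is downward excluded just below $r_{i'}$. If $r_{i'} \geq 2$, then $r_{i'}-1 \in [d]$ fails the defining constraint for some $i^* \in [k]$. If $r_{i'} = 1$, the same conclusion holds with $r_{i'}-1 = 0$ and witness $i^* = k$, because then $k + i' > k = k + |\Omega^1_{m_k} \cap \Omega^2_0|$. In either case, there exists $i^* \in [k]$ with
\begin{align*}
i^* + i' \;>\; k + |\Omega^1_{m_{i^*}} \cap \Omega^2_{r_{i'}-1}|.
\end{align*}

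The second ingredient is the monotonicity observation: since $\Omega^2_{r_{i'}}$ differs from $\Omega^2_{r_{i'}-1}$ by exactly one element, the intersection cardinality can grow by at most $1$. Combining this with the strict inequality just displayed and with the (non-strict) constraint that $r_{i'}$ itself satisfies, namely $i^* + i' \leq k + |\Omega^1_{m_{i^*}} \cap \Omega^2_{r_{i'}}|$, forces
\begin{align*}
|\Omega^1_{m_{i^*}} \cap \Omega^2_{r_{i'}}| \;=\; |\Omega^1_{m_{i^*}} \cap \Omega^2_{r_{i'}-1}| + 1 \;=\; i^* + i' - k,
\end{align*}
which is exactly the desired equality for the index $i = i^*$.

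The only subtlety I foresee is the boundary case $r_{i'} = 1$; without the convention $\Omega^2_0 := \varnot$, the expression $\Omega^2_{r_{i'}-1}$ is not defined and the ``minimality gives a failing constraint one step below'' line of argument would have to be repackaged. With the convention in place and the explicit witness $i^* = k$ at $r = 0$, both cases collapse into a single uniform argument, so the proof reduces to the two-line combination above.
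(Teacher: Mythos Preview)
Your proof is correct and follows essentially the same approach as the paper's: both exploit the minimality of $r_{i'}$ to find an index where the constraint fails at $r_{i'}-1$, then use the fact that $|\Omega^1_{m_i}\cap\Omega^2_r|$ increases by at most $1$ to squeeze out the equality. The only difference is cosmetic: the paper handles the boundary case $r_{i'}=1$ separately (deducing directly that $i'=1$ and $|\Omega^1_{m_k}\cap\Omega^2_1|=1$), whereas you fold it into the general argument via the convention $\Omega^2_0:=\varnothing$.
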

\begin{proof}
The minimality of \(r_{i'}\) implies that either \(r_{i'} = 1\) or \(r_{i'} > 1\) and there exists \(i \in [k]\) such that \(i + i' > k + | \Og^1_{m_i} \cap \Og^2_{r_{i'}-1}|\). 

If \(r_{i'} = 1\), then the constraint in Eq.~\ref{eq: constraint for first step} corresponding to \(k\) reads 
\begin{gather}
k + i' \leq k + \underbrace{| \Og^1_{m_k} \cap \Og^2_{1}|}_{\leq 1}, 
\end{gather}
implying that \(i' = 1\) and \(| \Og^1_{m_k} \cap \Og^2_{1}| = 1\). 

If \(r_{i'} > 1\), then since \(| \Og^1_{m_i} \cap \Og^2_{r_{i'}}| - | \Og^1_{m_i} \cap \Og^2_{r_{i'}-1}| \leq 1\), it must hold that \(i + i' = k + | \Og^1_{m_i} \cap \Og^2_{r_{i'}}|\). \end{proof}

For each \(i' \in [k]\), we denote the maximal element in \(\Og_{r_{i'}}^2\) according to the \(\geq^2\) order by \(y_{i'}\). The set we seek is \(S:=\{ y_1, \ldots, y_k\}\). 
\begin{lem}
\label{lem: y's are distinct and monotonic}
The set \(S\) is a \(k\)-set and its elements satisfy \begin{gather}
y_k \geq^2 \cdots \geq^2 y_2 \geq^2 y_1.
\end{gather}
\end{lem}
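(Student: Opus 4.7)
The plan is to reduce both claims to a single monotonicity fact: that the sequence \(r_1, r_2, \ldots, r_k\) is strictly increasing in \([d]\). Once strict monotonicity is in hand, distinctness of the \(r_{i'}\)'s will force distinctness of the \(y_{i'}\)'s, because \(y_{i'}\), being the \(\geq^2\)-maximal element of \(\Og^2_{r_{i'}}\), is precisely the unique element \(y \in [d]\) whose \(\geq^2\)-rank \(\min\{\ell \mid y \in \Og^2_\ell\}\) equals \(r_{i'}\). Moreover, by the definition of \(\geq^2\), a strictly increasing sequence of ranks translates directly into the claimed chain \(y_k \geq^2 \cdots \geq^2 y_1\) (in fact, with strict \(>^2\) between successive elements).

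For the monotonicity step, I would proceed as follows. Fix \(i' \in [k-1]\). By Lemma~\ref{lem: minimality implies equalities}, there exists some \(i \in [k]\) such that
\begin{align*}
i + i' = k + |\Og^1_{m_i} \cap \Og^2_{r_{i'}}|.
\end{align*}
Now consider the constraint from Eq.~\ref{eq: constraint for first step} that \(r_{i'+1}\) must satisfy for the same index \(i\):
\begin{align*}
i + (i'+1) \leq k + |\Og^1_{m_i} \cap \Og^2_{r_{i'+1}}|.
\end{align*}
Combining the two yields \(|\Og^1_{m_i} \cap \Og^2_{r_{i'+1}}| \geq i + i' + 1 - k > |\Og^1_{m_i} \cap \Og^2_{r_{i'}}|\). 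Since the chain \(\{\Og^2_\ell\}_{\ell=1}^d\) is increasing in \(\ell\), the map \(\ell \mapsto |\Og^1_{m_i} \cap \Og^2_\ell|\) is weakly increasing, so the strict inequality above forces \(r_{i'+1} > r_{i'}\).

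Finally, I would package the consequences. Strict monotonicity gives \(r_1 < r_2 < \cdots < r_k\), so the elements \(y_1, y_2, \ldots, y_k\) (each identified by its \(\geq^2\)-rank) are distinct, which is exactly the assertion \(|S| = k\); and the same monotonicity, read through the definition of \(\geq^2\), is exactly the statement \(y_k \geq^2 \cdots \geq^2 y_2 \geq^2 y_1\). The main (mild) obstacle is setting up the bookkeeping so that the equality guaranteed by Lemma~\ref{lem: minimality implies equalities} can be paired with the defining constraint at level \(i'+1\) for the \emph{same} index \(i\); once that is done, the increment by one on the left side of Eq.~\ref{eq: constraint for first step} is the lever that compels a strict gain in \(|\Og^1_{m_i} \cap \Og^2_{r_{i'+1}}|\) and hence in \(r_{i'+1}\) itself.
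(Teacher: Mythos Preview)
Your proposal is correct and follows essentially the same route as the paper: use Lemma~\ref{lem: minimality implies equalities} to get an equality at the smaller index \(i'\), pair it with the defining constraint Eq.~\ref{eq: constraint for first step} at the larger index for the \emph{same} \(i\), and conclude \(r_{i'+1} > r_{i'}\) from the weak monotonicity of \(\ell \mapsto |\Og^1_{m_i} \cap \Og^2_\ell|\). The only cosmetic difference is that the paper compares an arbitrary pair \(i'_1 < i'_2\) directly rather than consecutive indices, and you spell out a bit more explicitly why strict monotonicity of the \(r_{i'}\)'s yields distinctness of the \(y_{i'}\)'s.
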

\begin{proof}
Let \(i'_1, i'_2 \in [k]\) be such that \(i'_1 < i'_2\). By Lem.~\ref{lem: minimality implies equalities}, there exists at least one \(i \in [k]\) such that \(i + i'_1 = k + | \Og^1_{m_i} \cap \Og^2_{r_{i'_1}}|\). Then Eq.~\ref{eq: constraint for first step} with \(i' = i'_2\) implies
\begin{gather}
k + | \Og^1_{m_i} \cap \Og^2_{r_{i'_1}}| =  i + i'_1 < i + i'_2 \leq k+| \Og^1_{m_i} \cap \Og^2_{r_{i'_2}}|.  
\end{gather} 
Therefore, \(r_{i'_2}\) has to be strictly larger than \(r_{i'_1}\). This shows that \(|S| = k\) and \(y_k \geq^2 \cdots \geq^2 y_2 \geq^2 y_1\).
\end{proof}

\begin{lem}
\label{lem: feasiblity of S}
The pair \((S_1, S)\) is a feasible pair. 
\end{lem}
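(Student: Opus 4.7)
The plan is to verify the reduced alignment constraints from Lem.~\ref{lem: some align are ok}: for every $(\ell_1,\ell_2) \in [d] \times [d]$,
\[
|S_1 \cap \Og^1_{\ell_1}| + |S \cap \Og^2_{\ell_2}| \leq |\Og^1_{\ell_1} \cap \Og^2_{\ell_2}| + k.
\]
The basic constraints are immediate from Lem.~\ref{lem: y's are distinct and monotonic}, which gives $|S| = k$.

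First I would identify the two cardinalities on the left in terms of the indices $m_i$ and $r_{i'}$. Since $m_i$ is the order of $x_i$ under $\geq^1$ and $m_1 < \cdots < m_k$, the cardinality $|S_1 \cap \Og^1_{\ell_1}|$ equals the largest $a \in \{0,1,\ldots,k\}$ satisfying $m_a \leq \ell_1$ (with the convention $m_0 = 0$). By construction $y_{i'}$ is the maximal element of $\Og^2_{r_{i'}}$ under $\geq^2$, and Lem.~\ref{lem: y's are distinct and monotonic} gives $r_1 < \cdots < r_k$, so the order of $y_{i'}$ under $\geq^2$ is exactly $r_{i'}$; hence $|S \cap \Og^2_{\ell_2}|$ equals the largest $b \in \{0,1,\ldots,k\}$ with $r_b \leq \ell_2$.

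If $a + b \leq k$, the constraint is trivial. Otherwise $a, b \geq 1$, and the defining inequality of $r_b$, namely Eq.~\ref{eq: constraint for first step} applied with $i' = b$ and $i = a$, yields
\[
a + b \leq k + |\Og^1_{m_a} \cap \Og^2_{r_b}|.
\]
The inclusions $\Og^1_{m_a} \subseteq \Og^1_{\ell_1}$ (since $m_a \leq \ell_1$) and $\Og^2_{r_b} \subseteq \Og^2_{\ell_2}$ (since $r_b \leq \ell_2$) then upper bound $|\Og^1_{m_a} \cap \Og^2_{r_b}|$ by $|\Og^1_{\ell_1} \cap \Og^2_{\ell_2}|$, closing the argument.

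There is no genuine obstacle here; the heavy lifting was done when the $r_{i'}$ were defined as the minimal indices satisfying Eq.~\ref{eq: constraint for first step} for all $i \in [k]$. That definition encodes exactly the alignment constraints indexed by the orders of the elements of $S_1$, and monotonicity of set inclusion propagates the inequality to arbitrary $(\ell_1, \ell_2)$.
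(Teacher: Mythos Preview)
Your proof is correct and follows essentially the same approach as the paper's: reduce to the constraint $|S_1\cap\Og^1_{\ell_1}|+|S\cap\Og^2_{\ell_2}|\le|\Og^1_{\ell_1}\cap\Og^2_{\ell_2}|+k$, identify the two cardinalities as indices $a,b$ with $m_a\le\ell_1$ and $r_b\le\ell_2$, invoke Eq.~\ref{eq: constraint for first step} at $(i,i')=(a,b)$, and finish by monotonicity of the chains. The only cosmetic difference is that you handle the degenerate case via ``$a+b\le k$'' whereas the paper phrases it as ``one of the intersections is empty''; both dispatch the edge cases equally well.
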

\begin{proof}
By Lem.~\ref{lem: y's are distinct and monotonic}, \(|S|=k\) and so \((S_1, S)\) is a \(k\)-set pair. Therefore, by Lem.~\ref{lem: some align are ok}, it suffices to check that for all \((\ell_1, \ell_2) \in [d] \times [d]\), 
\begin{align}
|S_1 \cap \Og^1_{\ell_1}| + |S \cap \Og^2_{\ell_2}| \leq | \Og^{1}_{\ell_1} \cap \Og^{2}_{\ell_2} | + k.
\end{align}
If \(S_1 \cap \Og^1_{\ell_1}\) or \(S \cap \Og^1_{\ell_2}\) is empty, then the constraint is satisfied. Otherwise, let \(i \in [k]\) be maximal such that \(m_{i} \leq \ell_1\), and \(i' \in [k]\) be maximal such that \(r_{i'} \leq \ell_2\). Then 
\begin{align}
|S_1 \cap \Og^1_{\ell_1}| + |S \cap \Og^2_{\ell_2}| &= |S_1 \cap \Og^1_{m_i}| + |S \cap \Og^2_{r_{i'}}| \\
&= i + i'\\
&\leq k + | \Og^1_{m_i} \cap \Og^2_{r_{i'}}| \\
&\leq k + | \Og^{1}_{\ell_1} \cap \Og^{2}_{\ell_2} |,
\end{align} 
where the last inequality follows from the containments \(\Og^1_{m_i} \subseteq \Og^1_{\ell_1}\) and \(\Og^2_{r_{i'}} \subseteq \Og^2_{\ell_2}\).
\end{proof}

Lastly, we show that the pair \((S_1, S)\) is at least as good as \((S_1, S_2)\).
\begin{lem}
\label{lem: optimal S}
The objective value at \((S_1, S)\) is greater than or equal to the objective value at \((S_1, S_2)\).
\end{lem}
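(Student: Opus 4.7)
The plan is to exploit the fact that $S_1$ is held fixed, so the $D_1$-contributions to the objective at $(S_1, S)$ and $(S_1, S_2)$ cancel and the claim reduces to the one-sided inequality $\sum_{i \in S_2} \tilde{\la}_i(D_2) \leq \sum_{i \in S} \tilde{\la}_i(D_2)$. I would prove this by a term-by-term comparison after sorting both $k$-sets with respect to $\geq^2$ and identifying each element's $\geq^2$-order with the rank of the corresponding coordinate of $\tilde{\la}(D_2)$.

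Concretely, I would enumerate $S_2 = \{z_1, \ldots, z_k\}$ so that $z_k \geq^2 \cdots \geq^2 z_1$ and let $s_{i'}$ denote the minimal index with $z_{i'} \in \Og^2_{s_{i'}}$. Because the chain $\{\Og^2_\ell\}$ satisfies $|\Og^2_{\ell+1} \setminus \Og^2_\ell| = 1$, distinct coordinates have distinct $\geq^2$-orders, and the same reasoning used in Lem.~\ref{lem: y's are distinct and monotonic} gives $s_1 < s_2 < \cdots < s_k$ and identifies $\tilde{\la}_{z_{i'}}(D_2)$ as the $s_{i'}$-th largest coordinate of $\tilde{\la}(D_2)$. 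By construction, $y_{i'}$ is the $\geq^2$-maximal element of $\Og^2_{r_{i'}}$, so its $\geq^2$-order is exactly $r_{i'}$ and $\tilde{\la}_{y_{i'}}(D_2)$ is the $r_{i'}$-th largest coordinate of $\tilde{\la}(D_2)$. The claim is thus reduced to showing $r_{i'} \leq s_{i'}$ for every $i' \in [k]$.

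For the key step I would apply the feasibility of $(S_1, S_2)$ at the alignment constraint indexed by $(m_i, s_{i'})$. Invoking Lem.~\ref{lem: some align are ok} to reduce to constraints of the form in Eq.~\ref{eq: comb align cons reduced}, this gives
\begin{align*}
i + i' \;=\; |S_1 \cap \Og^1_{m_i}| + |S_2 \cap \Og^2_{s_{i'}}| \;\leq\; |\Og^1_{m_i} \cap \Og^2_{s_{i'}}| + k
\end{align*}
for every $i \in [k]$; that is, $s_{i'}$ satisfies the defining constraint Eq.~\ref{eq: constraint for first step} for $r_{i'}$. Minimality of $r_{i'}$ then forces $r_{i'} \leq s_{i'}$, whence $\tilde{\la}_{y_{i'}}(D_2) \geq \tilde{\la}_{z_{i'}}(D_2)$, and summing over $i' \in [k]$ closes the argument. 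The only point that requires care is the convention that a higher $\geq^2$-order corresponds to a \emph{smaller} eigenvalue, which inverts the direction of the final comparison; once this bookkeeping is pinned down, the proof is a one-line consequence of feasibility applied at the pairs $(m_i, s_{i'})$ together with the definition of $r_{i'}$.
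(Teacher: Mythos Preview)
Your argument is correct and matches the paper's approach: both reduce to the termwise comparison $\tilde{\la}_{y_{i'}}(D_2) \geq \tilde{\la}_{z_{i'}}(D_2)$, i.e., $r_{i'} \leq s_{i'}$ (equivalently $z_{i'} \geq^2 y_{i'}$). The paper simply asserts this follows from the minimality of $r_{i'}$, whereas you spell out the justification explicitly by invoking feasibility of $(S_1, S_2)$ at the pairs $(m_i, s_{i'})$ to show $s_{i'}$ satisfies the defining constraint~Eq.~\ref{eq: constraint for first step}.
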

\begin{proof}
Let \(S_2 = \{ z_1, \ldots, z_k\}\), denoted so that \(z_k \geq^2 \cdots \geq^2 z_2 \geq^2 z_1\). Recall that for each \(i' \in [k]\), \(r_{i'}\) is the order of the eigenvalue \(\tilde{\la}_{y_{i'}} (D_2)\). Since \(r_{i'}\) is chosen to be minimal, the inequality \(z_{i'} \geq^2 y_{i'}\) holds. This implies 
\begin{align}
    \tilde{\la}_{y_{i'}}(D_2) - \tilde{\la}_{z_{i'}}(D_2) \geq 0.
\end{align}
The difference between objective value at \((S_1, S)\) and the objective value at \((S_1, S_2)\) is  \(\tilde{\la} (D_2)^T (\one_{S} - \one_{S_2})\) which equals 
\(\sum_{i'=1}^k \tilde{\la}_{y_{i'}}(D_2) - \tilde{\la}_{z_{i'}}(D_2) \geq 0 \).
\end{proof}

\subsection{Second step}
\label{subsubsec: second}
Now we show that that the objective value at \((S, S)\) is greater than or equal to the objective value at \((S_1, S)\). We do so by showing that for each \(i \in [k]\), there exist at least \(i\) elements in \(S\) that precede \(x_i\) in the \(\geq^1\) order.

\begin{lem}
\label{lem: the other foot}
For each \(i \in [k]\), \(| S \cap \Og^1_{m_i}| \geq i\).    
\end{lem}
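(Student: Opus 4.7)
The plan is to establish a stronger inductive statement from which the lemma follows: for every $i, i' \in [k]$,
\begin{align}
t_i(i') := |\{i'' \leq i' : y_{i''} \in \Og^1_{m_i}\}| \geq \max(0, i + i' - k).
\end{align}
Specializing to $i' = k$ recovers the desired lower bound $|S \cap \Og^1_{m_i}| \geq i$. I will prove this stronger claim by induction on $i'$, keeping $i$ as a free parameter throughout.

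The base case $i' = 1$ is immediate: the right-hand side $\max(0, i + 1 - k)$ is nonzero only for $i = k$, in which case the claim reduces to showing $y_1 \in \Og^1_{m_k}$. This follows because the only nontrivial constraint at $i' = 1$ is $|\Og^1_{m_k} \cap \Og^2_{r_1}| \geq 1$, and Lem~\ref{lem: minimality implies equalities} identifies $k$ as the unique witness, so $y_1 \in \Og^1_{m_k}$.

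For the inductive step at $i'$, fix $i$ such that $i + i' > k$ (otherwise the bound is trivial). If $y_{i'} \in \Og^1_{m_i}$, the claim follows from the inductive hypothesis at $i' - 1$ plus one. If $y_{i'} \notin \Og^1_{m_i}$, then Lem~\ref{lem: minimality implies equalities} furnishes a witness $j^*$ satisfying $j^* + i' = k + |\Og^1_{m_{j^*}} \cap \Og^2_{r_{i'}}|$ and $y_{i'} \in \Og^1_{m_{j^*}}$; since $y_{i'} \notin \Og^1_{m_i}$ and the chain $\Og^1_{m_1} \subseteq \cdots \subseteq \Og^1_{m_k}$ is increasing in $j$, the witness must satisfy $j^* > i$. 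I plan to leverage the slack in the constraint for $\Og^1_{m_i}$ at step $i'$---which must be present because a higher-level constraint is the binding one---to argue that the count $t_i(i' - 1)$ was already one ahead of its nominal inductive bound. A secondary downward induction on $i$ from $i = k$ (whose base case $t_k(i') = i'$ follows since every $y_{i''}$ lies in some $\Og^1_{m_j} \subseteq \Og^1_{m_k}$ by Lem~\ref{lem: minimality implies equalities}) helps close the argument.

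The step I expect to be the main obstacle is precisely this second case of the inductive step: quantifying how much slack the constraint for $\Og^1_{m_i}$ accumulates at $r_{i'}$ and converting it into a strengthening of the inductive hypothesis. Each element of $\Og^1_{m_i} \cap \Og^2_{r_{i'}}$ that is not itself among $\{r_1, \ldots, r_{i'}\}$ represents a missed position skipped by the greedy algorithm, and each such miss should be imputable to a binding constraint at some level $j > i$. Carefully accounting for these misses---using the chain structure of the $\Og^1_{m_j}$'s together with the minimality of the $r_{i'}$'s as captured by Lem~\ref{lem: minimality implies equalities}---is what should deliver the bound.
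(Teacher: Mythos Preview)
Your strengthened two–parameter hypothesis \(t_i(i')\ge\max(0,i+i'-k)\) is correct, and the double induction (primary on \(i'\), secondary downward on \(i\)) is a sound scaffolding. It is a close cousin of the paper's argument, which runs a single strong induction on \(p=k-i+1\) and, for each fixed \(i\), partitions the indices \(i'\) into those with witness \(\gamma(i')\le i\) (so \(y_{i'}\in\Og^1_{m_i}\) directly) and those with \(\gamma(i')>i\) (handled via the inductive hypothesis at level \(\gamma(i')\)). Your Case~1/Case~2 split is exactly this partition, just organised differently.

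The genuine gap is your Case 2. The plan you sketch---using ``slack'' in the level-\(i\) constraint to conclude that \(t_i(i'-1)\) was already one ahead of its nominal bound---is not the right mechanism, and I do not see how to make it work: the level-\(i\) constraint at \(r_{i'}\) need not be strictly slack even when the binding witness is \(j^*>i\). The clean closure, which is also what the paper does in different packaging, uses the secondary hypothesis at \(j^*\) rather than the primary one at \(i'-1\). You have \(t_{j^*}(i')\le |\Og^1_{m_{j^*}}\cap\Og^2_{r_{i'}}|\) trivially (since \(\{y_1,\dots,y_{i'}\}\subseteq\Og^2_{r_{i'}}\)), and the witness equality gives \(|\Og^1_{m_{j^*}}\cap\Og^2_{r_{i'}}|=j^*+i'-k\). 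Combined with the secondary hypothesis \(t_{j^*}(i')\ge j^*+i'-k\), this forces
\[
\{y_1,\dots,y_{i'}\}\cap\Og^1_{m_{j^*}}=\Og^1_{m_{j^*}}\cap\Og^2_{r_{i'}}.
\]
Since \(\Og^1_{m_i}\subseteq\Og^1_{m_{j^*}}\), intersecting both sides with \(\Og^1_{m_i}\) gives \(\{y_1,\dots,y_{i'}\}\cap\Og^1_{m_i}=\Og^1_{m_i}\cap\Og^2_{r_{i'}}\), and hence \(t_i(i')=|\Og^1_{m_i}\cap\Og^2_{r_{i'}}|\ge i+i'-k\) by the defining constraint for \(r_{i'}\). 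Notice that the primary induction on \(i'\) is not invoked at all in Case~2; once you see this, you can in fact drop the primary induction entirely and run only the downward induction on \(i\), exactly as the paper does.
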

\begin{proof}
We argue by strong induction on \(p := k - i +1\). We start with the base case \(p=1\). As in the proof of Lem.~\ref{lem: minimality implies equalities}, for each \(i'\), there exists \(i \in [k]\) such that 
\begin{gather}
 i + i' = k + | \Og^1_{m_i} \cap \Og^2_{r_{i'}}|   
\end{gather}
and either \(r_{i'} = 1\), or \(r_{i'} > 1\) and \(| \Og^1_{m_i} \cap \Og^2_{r_{i'}}| -  | \Og^1_{m_i} \cap \Og^2_{r_{i'}-1}| = 1\). If \(r_{i'} = 1\), then \(i'=1\) and \(y_{1} \in \Og^1_{m_k}\). If \(r_{i'} > 1\), then \(y_{i'} \in \Og^1_{m_{i}} \subseteq \Og^1_{m_k}\). Therefore, \(S \subseteq \Og^1_{m_k}\), and so \(| S \cap \Og^1_{m_k}| = k\).

Define \(\ga(1) := k\). For \(i' \in [k]\) such that \(i' > 1\), define \(\ga(i') \in [k]\) to be minimal such that 
\begin{gather}
\ga(i') + i' = k + | \Og^1_{m_{\ga(i')}} \cap \Og^2_{r_{i'}}| > k + | \Og^1_{m_{\ga(i')}} \cap \Og^2_{r_{i'}-1}|.
\end{gather}

Suppose that \(p > 1\) and the statement is true for all elements of \([p-1]\). Let \(T = \{ i'_1, \ldots, i'_{|T|}\}\) be the subset of indices \(i' \in [k]\) that satisfy \(k-\ga(i')+1 < p\), denoted so \(i'_1 < \ldots < i'_{|T|}\). Notice that \(i'_1 = 1\). Observe that for \(i' \in T^c\), \(y_{i'} \in \Og^1_{m_{\ga(i')}}\) implies   \(y_{i'} \in \Og^1_{m_{k-p+1}}\). 

For each \(i' \in T\), we have the equality
\begin{align}
\label{eq: exc inc with o and i}
| \Og^1_{m_{\ga(i')}} \cap \Og^2_{r_{i'}}| = \ga(i') + i' - k
\end{align}
and \(k-\ga(i')+1 < p\). By the inductive hypothesis,  \(|\Og^1_{m_{\ga(i')}} \cap S| \geq \ga(i')\). 
Because of the containment \((\Og^1_{m_{\ga(i')}} \cap S) \cap (\Og^2_{r_{i'}} \cap S) \subseteq \Og^1_{m_{\ga(i')}} \cap \Og^2_{r_{i'}}\) and the inclusion-exclusion inequality
\begin{gather}
| (\Og^1_{m_{\ga(i')}} \cap S) \cap (\Og^2_{r_{i'}} \cap S)| \geq \ga(i') + i' - k
\end{gather}
the equality Eq.~\ref{eq: exc inc with o and i} implies 
\begin{gather}
\label{eq: equality of sets}
(\Og^1_{m_{\ga(i')}} \cap S) \cap (\Og^2_{r_{i'}} \cap S) = \Og^1_{m_{\ga(i')}} \cap \Og^2_{r_{i'}},
\end{gather}
which implies \(\Og^1_{m_{\ga(i')}} \cap \Og^2_{r_{i'}} \subseteq S\). This in turn implies \(\Og^1_{m_{k-p+1}} \cap \Og^2_{r_{i'}} \subseteq S\) for all \(i' \in T\). The constraint Eq.~\ref{eq: comb align cons reduced} implies
\begin{gather} 
|\Og^1_{m_{k-p+1}} \cap \Og^2_{r_{i'_{|T|}}}| \geq k-p+1+ i'_{|T|} - k = i'_{|T|} - p + 1.
\end{gather}
Since \(({\Og^1_{m_{k-p+1}}} \cap {\Og^2_{r_{i'_{|T|}}}}) \subseteq ({\Og^2_{r_{i'_{|T|}}}} \cap S)\), by inclusion-exclusion
\begin{align}
    |({\Og^1_{m_{k-p+1}}} \cap {\Og^2_{r_{i'_{|T|}}}}) \cap \{y_{i'_1}, y_{i'_2}, \ldots, y_{i'_{|T|}}\}| &\geq i'_{|T|} -p + 1 + |T| - i'_{|T|} \\
    &= |T| - p +1
\end{align}
Together with the elements whose subscripts are in \(T^c\), this implies
\( | S \cap \Og^1_{m_{k-p+1}}| \geq k - |T| + |T| - p + 1 = k - p + 1.\) \end{proof}

\begin{lem}
\label{lem: final lemma}
 The objective value at \((S, S)\) is greater than or equal to the
objective value at \((S_1, S)\). 
\end{lem}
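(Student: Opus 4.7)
The plan is to observe that the two objectives differ only in their first component: since the second argument of both pairs is \(S\), we have
\begin{align*}
(\tilde{\la}(D_1) \oplus \tilde{\la}(D_2))^T(\one_{S} \oplus \one_{S}) - (\tilde{\la}(D_1) \oplus \tilde{\la}(D_2))^T(\one_{S_1} \oplus \one_{S}) = \tilde{\la}(D_1)^T(\one_{S} - \one_{S_1}).
\end{align*}
So it suffices to show \(\sum_{y \in S} \tilde{\la}_y(D_1) \geq \sum_{x \in S_1} \tilde{\la}_x(D_1)\).

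To prove this, first I would enumerate both \(k\)-sets according to the \(\geq^1\) order. We already have \(S_1 = \{x_1, \ldots, x_k\}\) listed so that \(x_k \geq^1 \cdots \geq^1 x_1\), with \(m_i = \min\{\ell : x_i \in \Og^1_\ell\}\). Analogously, write \(S = \{s_1, \ldots, s_k\}\) so that \(s_k \geq^1 \cdots \geq^1 s_1\), and set \(n_i := \min\{\ell : s_i \in \Og^1_\ell\}\), so that \(n_1 < n_2 < \cdots < n_k\). The key input is Lem.~\ref{lem: the other foot}, which asserts that for each \(i \in [k]\), \(|S \cap \Og^1_{m_i}| \geq i\). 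This says that at least \(i\) elements of \(S\) have \(\geq^1\)-order at most \(m_i\); since the \(s_j\) are listed in increasing \(\geq^1\)-order, the \(i\)-th of them satisfies \(n_i \leq m_i\).

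Finally, the definition of the \(\geq^1\) order couples it to the spectrum of \(D_1\): if \(y\) has order \(n\), then \(\tilde{\la}_y(D_1) = \la_n(D_1)\), the \(n\)-th largest eigenvalue of \(D_1\). Since \(\la(D_1)\) is weakly decreasing, \(n_i \leq m_i\) yields \(\tilde{\la}_{s_i}(D_1) \geq \tilde{\la}_{x_i}(D_1)\) for each \(i\). Summing over \(i \in [k]\) gives
\begin{align*}
\sum_{i=1}^k \tilde{\la}_{s_i}(D_1) \geq \sum_{i=1}^k \tilde{\la}_{x_i}(D_1),
\end{align*}
which is precisely the inequality needed.

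The argument is almost entirely bookkeeping once Lem.~\ref{lem: the other foot} is in hand; all the combinatorial work has already been done there. The one thing to verify is that ties in the \(\geq^1\) order (i.e., degeneracies in \(D_1\)) cause no issue: they reflect the choice of spectral decomposition that fixes the total order, and the coordinate-wise inequality \(\la_{n_i}(D_1) \geq \la_{m_i}(D_1)\) is valid regardless of how ties are broken. Thus no additional obstacle is expected; the main substance of the second step of Sec.~\ref{subsubsec: second} has already been absorbed into the proof of Lem.~\ref{lem: the other foot}.
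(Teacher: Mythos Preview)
Your proof is correct and follows essentially the same approach as the paper: relabel the elements of \(S\) in \(\geq^1\) order, use Lem.~\ref{lem: the other foot} to deduce that the \(i\)-th element of \(S\) precedes \(x_i\) in the \(\geq^1\) order, and conclude the coordinatewise eigenvalue inequality. The only cosmetic difference is notation (\(s_i,n_i\) versus the paper's \(x'_i\)) and your explicit remark about ties, which the paper handles implicitly via the fixed spectral decomposition.
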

\begin{proof}
We relabel the elements of \(S\) to reflect their order according to \(\geq^1\). Let us denote them with \(x'_1, x'_2, \ldots, x'_k\) such that \(x'_k \geq^1 \cdots \geq^1 x'_2 \geq^1 x'_1\). 

By Lem.~\ref{lem: the other foot}, for each \(i \in [k]\), there exist at least \(i\) elements in \(S\) that are in \(\Og^1_{m_{i}}\). In particular, this implies that \(\{x'_1, \ldots, x'_i\} \subseteq \Og^1_{m_{i}}\). Hence, for all \(i \in [k]\), \(x_i \geq^1 x'_i\).    

The difference between objective value at \((S, S)\) and the objective value at \((S_1, S)\) is  \(\tilde{\la}(D_1)^T (\one_{S} -  \one_{S_1})\) which equals
\(\sum_{i=1}^k \tilde{\la}_{x'_i}(D_1) - \tilde{\la}_{x_i} (D_1) \geq 0\). \end{proof}

\bibliographystyle{plainurl}
\bibliography{references}

\begin{thebibliography}{10}

\bibitem{Alhejji2024}
M.A. Alhejji and E.~Knill.
\newblock Towards a resolution of the spin alignment problem.
\newblock {\em Communications in Mathematical Physics}, 405:119, 2024.
\newblock \href {https://doi.org/10.1007/s00220-024-04980-1} {\path{doi:10.1007/s00220-024-04980-1}}.

\bibitem{Ando_1989}
T.~Ando.
\newblock Majorization, doubly stochastic matrices, and comparison of eigenvalues.
\newblock {\em Linear Algebra and its Applications}, 118:163--248, 1989.
\newblock \href {https://doi.org/10.1016/0024-3795(89)90580-6} {\path{doi:10.1016/0024-3795(89)90580-6}}.

\bibitem{Aubrun2008}
G.~Aubrun and I.~Nechita.
\newblock Catalytic majorization and {\(\ell_p\)} norms.
\newblock {\em Communications in Mathematical Physics}, 278:133--144, 2008.
\newblock \href {https://doi.org/10.1007/s00220-007-0382-4} {\path{doi:10.1007/s00220-007-0382-4}}.

\bibitem{Bhatia1997}
R.~Bhatia.
\newblock {\em Matrix Analysis}.
\newblock Springer, 1997.
\newblock \href {https://doi.org/10.1007/978-1-4612-0653-8} {\path{doi:10.1007/978-1-4612-0653-8}}.

\bibitem{Bhatia2001}
R.~Bhatia.
\newblock Linear algebra to quantum cohomology: The story of {A}lfred {H}orn's inequalities.
\newblock {\em The American Mathematical Monthly}, 108:289--318, 2001.
\newblock \href {https://doi.org/10.1080/00029890.2001.11919754} {\path{doi:10.1080/00029890.2001.11919754}}.

\bibitem{Bondar2003}
J.~V. Bondar.
\newblock Schur majorization inequalities for symmetrized sums with applications to tensor products.
\newblock {\em Linear Algebra and its Applications}, 360:1--13, 2003.
\newblock \href {https://doi.org/10.1016/S0024-3795(02)00461-5} {\path{doi:10.1016/S0024-3795(02)00461-5}}.

\bibitem{Cubitt2015}
T.~Cubitt, D.~Elkouss, W.~Matthews, M.~Ozols, D.~P{\'e}rez-Garc{\'i}a, and S.~Strelchuk.
\newblock Unbounded number of channel uses may be required to detect quantum capacity.
\newblock {\em Nature Communications}, 6:6739, 2015.
\newblock \href {https://doi.org/10.1038/ncomms7739} {\path{doi:10.1038/ncomms7739}}.

\bibitem{DAFTUAR200580}
S.~Daftuar and P.~Hayden.
\newblock Quantum state transformations and the schubert calculus.
\newblock {\em Annals of Physics}, 315(1):80--122, 2005.
\newblock Special Issue.
\newblock \href {https://doi.org/10.1016/j.aop.2004.09.012} {\path{doi:10.1016/j.aop.2004.09.012}}.

\bibitem{Devetak2005}
I.~Devetak.
\newblock The private classical capacity and quantum capacity of a quantum channel.
\newblock {\em IEEE Transactions on Information Theory}, 51:44--55, 2005.
\newblock \href {https://doi.org/10.1109/TIT.2004.839515} {\path{doi:10.1109/TIT.2004.839515}}.

\bibitem{Fan1949}
K.~Fan.
\newblock On a theorem of {W}eyl concerning eigenvalues of linear transformations {I}.
\newblock {\em Proceedings of the National Academy of Sciences}, 35:652--655, 1949.
\newblock \href {https://doi.org/10.1073/pnas.35.11.652} {\path{doi:10.1073/pnas.35.11.652}}.

\bibitem{Gillespie2019}
M.~Gillespie.
\newblock {\em Variations on a Theme of Schubert Calculus}, pages 115--158.
\newblock Springer International Publishing, 2019.
\newblock \href {https://doi.org/10.1007/978-3-030-05141-9_4} {\path{doi:10.1007/978-3-030-05141-9_4}}.

\bibitem{Halmos1969}
P.R. Halmos.
\newblock Two subspaces.
\newblock {\em American Mathematical Society}, 144:381--389, 1969.
\newblock \href {https://doi.org/10.1090/S0002-9947-1969-0251519-5} {\path{doi:10.1090/S0002-9947-1969-0251519-5}}.

\bibitem{Knutson2009}
A.~Knutson.
\newblock Schubert calculus and quantum information, 2009.
\newblock URL: \url{https://pi.math.cornell.edu/~allenk/qclectures.pdf}.

\bibitem{Koudia2022}
S.~Koudia, A.S. Cacciapuoti, K.~Simonov, and M.~Caleffi.
\newblock How deep the theory of quantum communications goes: Superadditivity, superactivation and causal activation.
\newblock {\em IEEE Communications Surveys \& Tutorials}, 24:1926--1956, 2022.
\newblock \href {https://doi.org/10.1109/COMST.2022.3196449} {\path{doi:10.1109/COMST.2022.3196449}}.

\bibitem{Leditzky2022b}
F.~Leditzky, D.~Leung, V.~Siddhu, G.~Smith, and J.A. Smolin.
\newblock Generic nonadditivity of quantum capacity in simple channels.
\newblock {\em Phys. Rev. Lett.}, 130:200801, 2023.
\newblock \href {https://doi.org/10.1103/PhysRevLett.130.200801} {\path{doi:10.1103/PhysRevLett.130.200801}}.

\bibitem{Leditzky_2023}
F.~Leditzky, D.~Leung, V.~Siddhu, G.~Smith, and J.A. Smolin.
\newblock The platypus of the quantum channel zoo.
\newblock {\em IEEE Transactions on Information Theory}, 69:3825--3849, 2023.
\newblock \href {https://doi.org/10.1109/TIT.2023.3245985} {\path{doi:10.1109/TIT.2023.3245985}}.

\bibitem{Lloyd1997}
S.~Lloyd.
\newblock Capacity of the noisy quantum channel.
\newblock {\em Phys. Rev. A}, 55:1613--1622, 1997.
\newblock \href {https://doi.org/10.1103/PhysRevA.55.1613} {\path{doi:10.1103/PhysRevA.55.1613}}.

\bibitem{Marshall2011}
A.W. Marshall, I.~Olkin, and B.C. Arnold.
\newblock {\em Inequalities: Theory of Majorization and its Applications}.
\newblock Springer, 2011.
\newblock \href {https://doi.org/10.1007/978-0-387-68276-1} {\path{doi:10.1007/978-0-387-68276-1}}.

\bibitem{Moslehian_2012}
M.S. Moslehian.
\newblock {K}y {F}an inequalities.
\newblock {\em Linear and Multilinear Algebra}, 60:1313–1325, 2012.
\newblock \href {https://doi.org/10.1080/03081087.2011.641545} {\path{doi:10.1080/03081087.2011.641545}}.

\bibitem{Nation2017}
J.B. Nation.
\newblock Notes on lattice theory, 2017.
\newblock URL: \url{https://math.hawaii.edu/~jb/lattice2017.pdf}.

\bibitem{Nielsen2001}
M.A. Nielsen and J.~Kempe.
\newblock Separable states are more disordered globally than locally.
\newblock {\em Phys. Rev. Lett.}, 86:5184--5187, 2001.
\newblock \href {https://doi.org/10.1103/PhysRevLett.86.5184} {\path{doi:10.1103/PhysRevLett.86.5184}}.

\bibitem{Niezgoda2022}
M.~Niezgoda.
\newblock Majorization refinements of {K}y {F}an’s eigenvalue inequality and related results.
\newblock {\em Mathematical Inequalities \& Applications}, 25:135–144, 2022.
\newblock \href {https://doi.org/10.7153/mia-2022-25-09} {\path{doi:10.7153/mia-2022-25-09}}.

\bibitem{Schrijver1998}
A.~Schrijver.
\newblock {\em Theory of linear and integer programming}.
\newblock John Wiley \& Sons, 1998.

\bibitem{shor2002}
P.~W. Shor.
\newblock The quantum channel capacity and coherent information.
\newblock In {\em lecture notes, MSRI Workshop on Quantum Computation}, 2002.

\bibitem{Wilde2013}
M.~M. Wilde.
\newblock {\em Quantum Information Theory}.
\newblock Cambridge University Press, 2013.
\newblock \href {https://doi.org/10.1017/cbo9781139525343} {\path{doi:10.1017/cbo9781139525343}}.

\end{thebibliography}
\end{document}